\let \oldthanks \thanks
\let \thanks \oldthanks
\newtheorem{theorem}{Theorem}
\newtheorem{lemma}{Lemma}
\newcommand\graphlabel{\vspace{-4pt}}
\newcommand\graphtext{\vspace{-6pt}}
\newcommand\graphgraph{\vspace{0pt}}
\newcommand\graphbottomtext{\vspace{0pt}}
\newcommand\tablelabel{\vspace{-4pt}}
\newcommand\equtext{\vspace{-3pt}}
\newcommand\titletext{\vspace{0pt}}
\newcommand\authortext{\vspace{-8pt}}
\begin{document}

\title{\LARGE{\textbf{Communication Lower Bound in Convolution Accelerators}}\titletext}

\author{\IEEEauthorblockN{Xiaoming Chen$^{1,2}$, Yinhe Han$^{1,2}$, Yu Wang$^{3}$}
\IEEEauthorblockA{$^1$Center for Intelligent Computing Systems, State Key Laboratory   of Computer Architecture, \\
Institute of Computing Technology, Chinese Academy of Sciences, Beijing, China\\
$^2$University of Chinese Academy of Sciences, Beijing, China\\
$^3$Department of Electronic Engineering, Tsinghua University, Beijing, China\\
Email: \{chenxiaoming, yinhes\}@ict.ac.cn, yu-wang@tsinghua.edu.cn}\authortext
\thanks{This paper will appear in 2020 26th IEEE International Symposium on High-Performance Computer Architecture (HPCA'20).}
}


\maketitle



\begin{abstract}
In current convolutional neural network (CNN) accelerators, communication (i.e., memory access) dominates the energy consumption. This work provides comprehensive analysis and methodologies to minimize the communication for CNN accelerators. For the off-chip communication, we derive the theoretical lower bound for any convolutional layer and propose a dataflow to reach the lower bound. This fundamental problem has never been solved by prior studies. The on-chip communication is minimized based on an elaborate workload and storage mapping scheme. We in addition design a communication-optimal CNN accelerator architecture. Evaluations based on the 65nm technology demonstrate that the proposed architecture nearly reaches the theoretical minimum  communication in a three-level memory hierarchy and it is computation dominant. The gap between the energy efficiency of our accelerator and the theoretical best value is only  37-87\%.
\end{abstract}


\begin{IEEEkeywords}
Convolutional neural network (CNN), CNN accelerator, communication lower bound
\end{IEEEkeywords}

\section{Introduction}\label{sec:intro}
Convolutional neural networks (CNNs) have achieved great successes in numerous practical applications (e.g., \cite{cnn_nips2012,cnn_cvpr2014,nlp_icml2008}). The reliable results produced by modern CNNs exclusively rely on the complex models and large amounts of data, which in turn bring significant demands in both performance and energy efficiency. Recently, a number of hardware accelerators based on either application-specific integrated circuits (ASICs) or field-programmable gate arrays (FPGAs) have been proposed to boost the performance and the energy efficiency of CNNs. 

Due to the large amount of data and complex data reuse patterns in convolution computation, CNN accelerators often involve a great number of memory accesses. Inputs and weights are typically stored in the off-chip dynamic random-access memory (DRAM). A static random-access memory (SRAM) based on-chip global buffer (GBuf) stores portions of inputs and weights which are loaded from the DRAM. Each processing element (PE) has some registers  (Regs) to store inputs and weights which are read from the GBuf. Partial sums (Psums) are stored in the GBuf or Regs. During computation, there is complex data transmission in the memory hierarchy. Normally, communication, but not computation, dominates the  energy consumption of a CNN accelerator. A DRAM access consumes 2 to 3 orders of magnitude higher energy than an arithmetic operation~\cite{energy_isca2016} and the DRAM access energy can be  more than 90\% of the total energy consumption of a CNN accelerator~\cite{one_asplos2014,cx_micro2016}. For the on-chip aspects, Regs can take up a large portion ($>$50\%) of the chip energy  while  arithmetic units consume less than 20\%~\cite{one_isca2016}. 
Therefore, from an energy point of view, 
current CNN accelerators are communication dominant. Minimizing the communication, therefore, is the key for  improving the energy efficiency of CNN accelerators.

Maximizing data reuse in convolutions helps reduce  communication. Data reuse heavily depends on the convolutional dataflow. There are various approaches to optimize the dataflow: 1) designing an elaborate dataflow~\cite{one_tnn2018,one_tcasi2018,one_isca2016,eyeriss_jssc2017,one_iccd2013,one_isca2015,one_asplos2014,one_asap2009,one_fpga2016,one_date2017,one_dsd2015,one2_tcasi2018,one_icpads2017}, 2) selecting the best dataflow from several candidates~
\cite{select_date2018,select_dac2016,select_tvlsi2017,select_hpca2017,one_aspdac19}, and 3) design space exploration (DSE)~\cite{exp_fpga2015,exp_iscas2017,exp_tvlsi2018,exp_arxiv2016,exp_date2015,exp_fpga2017,exp_aspdac2016,exp_dac2017,exp_fccm2016}. A fair number of these studies focus on the performance and/or the energy efficiency of the computational components. 
The energy-dominant component, communication, has not been comprehensively investigated. Moreover, in most existing studies, the dataflow is designed based on intuitive/heuristic analysis, which may not guarantee the optimality.


If the inputs, weights, and outputs of a convolutional layer are accessed exactly once at every level of the memory hierarchy, the layer-wise minimum communication is obviously reached. However, such an ambitious goal requires a huge on-chip memory. The requirement of memory resources varies for different applications. Thus, under given hardware resources, searching for a dataflow and an architecture that minimize the communication has much more practical significance. 
This problem has  never been solved. In this work, we provide detailed analysis and methodologies to reach the lower bounds of both off-chip communication and on-chip communication. Specifically, we make the following contributions in this paper.
\begin{itemize}
\item We solve a fundamental problem in CNN accelerators: \textit{what the  lower bound of the off-chip communication of a convolutional layer is, if it is implemented on a CNN accelerator with a limited on-chip memory.} We provide a mathematical derivation for this problem. 
 \item We demonstrate that convolutions  have only one more level of data reuse (sliding window reuse) than matrix multiplications (MMs). Based on this conclusion, we elaborate a dataflow which fuses sliding window reuse and a communication-optimal  MM implementation,  to minimize the off-chip  communication.
\item We propose a workload and storage mapping scheme such that both GBuf communication and Reg communication  respectively reach their  lower bounds.
\item A communication-optimal CNN accelerator architecture is proposed, which not only reaches the minimum communication, but also can adapt to different convolutional layer dimensions with high resource utilization.
\end{itemize}

The significance of this work is not purely on the proposed dataflow and/or architecture, but more importantly, from the point of view of a theoretical basis, to reveal the design methodology and principle to minimize the communication for CNN accelerators.

\section{Background}\label{sec:back}

\subsection{Convolutional Layers}

Fig.~\ref{fig:conv} illustrates a general convolutional layer in CNNs. We have $B$ input images in a batch and $C_O$ kernels of weights, producing $B$ output images (only 1 input image and 1 output image are shown in Fig.~\ref{fig:conv}). Each input image has $C_I$ channels and each output image has $C_O$ channels. The output channel dimension is $H_O\!\times\! W_O$. Each  kernel is a $C_I\! \times\! H_K \!\times\! W_K$ 3D matrix. Each output is computed by an inner product between the inputs in a sliding window on the input image and the weights in a kernel. 
The stride size is the position difference between two adjacent sliding windows. Fig.~\ref{fig:code} lists the pseudo code of a convolutional layer. It contains 7 levels of loops and assumes the unit stride size.

\begin{figure}[t]
  \centering
  \includegraphics[width=1\columnwidth]{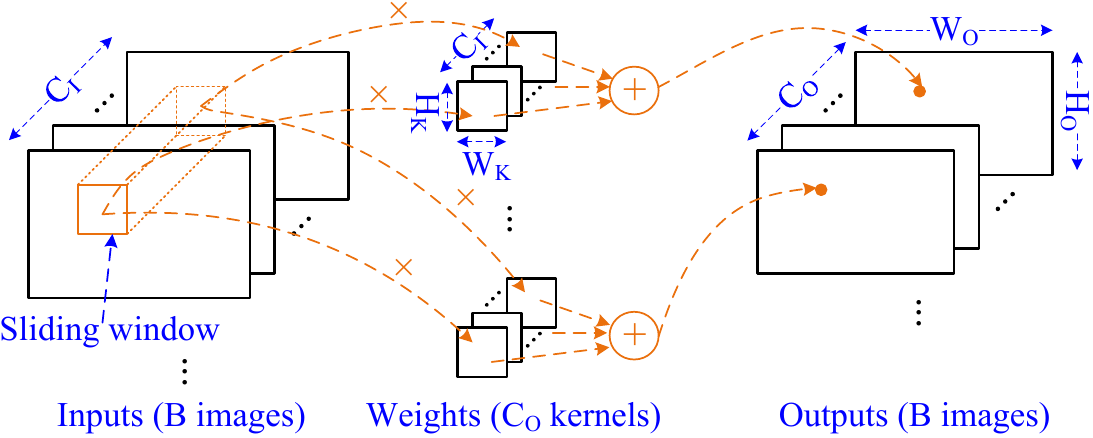}\\
  \graphlabel
  \caption{Convolutional layer in CNNs.}\label{fig:conv}\graphgraph
\end{figure}

\begin{figure}[t]
  \centering
  \includegraphics[width=.9\columnwidth]{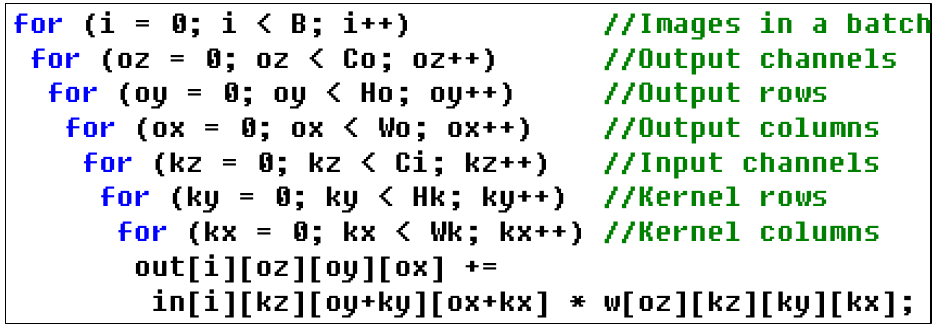}\\
  \graphlabel
  \caption{Pseudo code of a  convolutional layer.}\label{fig:code}\graphtext
\end{figure}



From a quick glance of Fig.~\ref{fig:code}, finding a dataflow with minimized  communication is challenging, due to the huge search space caused by different loop orders, loop stride sizes, loop unrolling schemes, etc. 
There are several data reuse patterns in convolutions, including input reuse (InR, an input is used by multiple kernels), sliding window reuse (WndR, an input is used by multiple overlapped sliding windows), weight reuse (WtR, a weight is used by multiple inputs), and output  reuse (OutR, an output  resides on chip during the entire computational process). Multiple data reuse patterns can be combined to form more complicated dataflows. Maximizing  data reuse also involves a huge search space.

In this work, we only consider the ordinary convolution algorithm, which is the most popular approach adopted by hardware accelerators. Those convolution algorithms with lower computational complexity, such as the Winograd algorithm~\cite{wino_cvpr2016} and fast Fourier transform based approaches~\cite{fft_2013}, are not considered. We target at minimizing the  communication of general convolution operations, so that our approach can be adopted in both inference and training of CNNs.


\subsection{Related Work}\label{sec:work}


A number of CNN accelerators are designed with an elaborate dataflow to optimize some objective(s) (e.g., performance, bandwidth, etc.)~\cite{one_tnn2018,one_tcasi2018,one_isca2016,one_iccd2013,one_isca2015,eyeriss_jssc2017,one_asplos2014,one_asap2009,one_fpga2016,one_date2017,one_dsd2015,one2_tcasi2018,one_icpads2017}. Unfortunately, their dataflows are designed almost based on intuitive/heuristic analysis. In other words, they  claimed the superiorities of the dataflows and/or accelerators but failed to explain why the designs are essentially the best. Such designs may not guarantee the optimality. A representative state-of-the-art is Eyeriss~\cite{one_isca2016,eyeriss_jssc2017} which claimed that the communication is optimized. We will show by experiments that neither its off-chip communication nor its on-chip communication is minimized.

Rather than using a single dataflow, several studies have integrated multiple dataflows into an accelerator (with increased hardware cost) and selected the best one according to the layer dimensions~\cite{select_date2018,select_dac2016,select_tvlsi2017,select_hpca2017,one_aspdac19}. These approaches usually perform better than the approaches based on a single dataflow. However, the claimed optimality is only the best one among the given candidates. If the defacto best solution is not included in the candidates, they cannot find the optimal solution.



To find the optimal dataflow with a particular objective, a possible approach is to exhaustively consider all possible loop orders and tiling sizes (i.e., the stride sizes for the loops). This is the DSE approach~\cite{exp_fpga2015,exp_iscas2017,exp_tvlsi2018,exp_arxiv2016,exp_date2015,exp_fpga2017,exp_aspdac2016,exp_dac2017,exp_fccm2016}. However, the search space is so huge that an exhaustive search is extremely time-consuming.  For instance, only considering two loops to minimize the off-chip communication of a particular layer leads to an enormous search space of 7.2$\times$10$^{13}$~\cite{exp_fpga2017}. Heuristics have to be adopted to find sub-optimal solutions. Exhaustive methods lack universality since they cannot tell people why the found dataflow is essentially the best. In this sense, for a new convolutional layer, re-conducting an exhaustive search is usually needed, as we do not know whether a known dataflow is still the best for the new layer.

{In the aforementioned approaches, some studies (\cite{one_iccd2013,one_isca2016,select_date2018,exp_fpga2017,exp_tvlsi2018,exp_date2015,exp_arxiv2016}) have considered communication optimization, while the others mainly focus on the computational components. Besides the three categories, there are other communication optimization approaches for CNN accelerators (e.g., the fused-layer approach~\cite{fuse_micro2016} that optimizes data movement between convolutional layers). Currently, no study has comprehensively analyzed the lower bound of communication in CNN accelerators.


Ref.~\cite{memreq_iiswc2018} analyzed the on-chip memory requirement such that both inputs and weights are read from the off-chip DRAM exactly once. This is the minimum possible off-chip communication. However, the required on-chip memory to achieve this goal is quite large (from several million bytes to hundreds of million bytes). On the other hand, the hardware resources are  fixed  but applications' requirements vary, so it is impossible to guarantee the goal all the time. In practice, searching for the minimum communication under given hardware resources has much more  significance.

\subsection{Preliminary: Red-blue Pebble Game}\label{sec:pebble}
Our derivation for the lower bound of the off-chip communication heavily depends on the red-blue pebble game~\cite{io_stoc1981}, which is a theoretical model to estimate the minimum volume of data transmission between two levels of memories. The derived lower bound is the best possible, in the sense that it is achievable by certain algorithm implementations. Here we review an important theorem of the red-blue pebble game as a preliminary of our derivation.




Suppose that the memory hierarchy comprises \textit{an unlimited slow memory} and \textit{a limited fast memory}. When optimizing the off-chip communication, they refer to the off-chip DRAM and the on-chip memory (e.g., SRAMs or Regs), respectively. The fast memory can hold only $S$ data entries. An algorithm is described by a directed acyclic graph (DAG), in which each node represents a data entry or an operation (producing a data entry as the output of the operation) and each edge represents an inter-data dependency.  We skip the definition of the original red-blue pebble game here  because it is usually difficult to use~\cite{pebble_mit2017}. Instead, the red-blue pebble game can equivalently be converted to an easier $S$-partition problem~\cite{io_stoc1981}, which is defined as follows.




Let $G(V,E)$ be a DAG describing an algorithm, where $V$ and $E$ are  the node and edge sets, respectively. A partition on $G$ is called an \textit{$S$-partition}, if the following four properties hold.

\noindent $\bullet$ Property 1: $V$ is partitioned into $h$ subsets $V_1,\!V_2,\!\cdots,\!V_h$, such that $V_i$'s are disjoint and their union is $V$.

\noindent $\bullet$ Property 2: there is no cyclic dependency among $V_i$'s.

\noindent $\bullet$ Property 3: for any $V_i$ ($1\!\le\! i\! \le\! h$), there exists a dominator set $D_i$ (nodes in $D_i$ are not necessarily in $V_i$) such that $|D_i|\!\le\! S$. A dominator set $D_i$ for $V_i$ is a set of nodes in $V$ such that any path from an input of $G$ to a node in $V_i$ contains some nodes in $D_i$.

\noindent $\bullet$ Property 4: for any $V_i$ ($1\!\le\! i \!\le\! h$), the output set size $|O_i|\!\le\! S$. The output set $O_i$ of $V_i$ is a set of nodes in $V_i$ that do not have any immediate successors in $V_i$.

Let $P(S)$ be the minimum number of subsets that any $S$-partition of a DAG must have.  The following theorem describes the communication lower bound based on the $S$-partition model (the proof is provided in~\cite{io_stoc1981}).
\begin{theorem}\label{theo:pebble}
Given a fast memory of size $S$, to finish a DAG that describes an algorithm, the minimum communication volume $Q$ between the fast memory and the slow memory satisfies\equtext
\begin{equation}\label{eq:comm}
Q \ge S \cdot \left( {P(2S) - 1} \right).
\end{equation}\equtext
\end{theorem}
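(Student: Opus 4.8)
The plan is to follow the classical argument of Hong and Kung and extract, from any legal execution, a $2S$-partition of the dependency DAG whose number of blocks is controlled by the I/O volume. Concretely, I would start from an arbitrary valid red--blue pebbling of $G(V,E)$ that performs exactly $Q$ transfers between the fast and slow memories, where $Q$ is minimal. Reading the execution as a time-ordered sequence of moves, I would cut it into consecutive \emph{phases} so that each phase, except possibly the last, contains exactly $S$ I/O moves (reads and writes combined). If $h$ denotes the resulting number of phases, the first $h-1$ phases already account for $S(h-1)$ transfers, so $Q \ge S(h-1)$.

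Next I would attach to each phase $i$ the block $V_i$ of vertices whose value is computed during that phase, slotting each source vertex into the first phase that reads it, so that $V_1,\dots,V_h$ partition $V$ (Property 1). The ordering by phase index immediately rules out cyclic dependence among blocks (Property 2), since an edge from $V_i$ to $V_j$ forces $i \le j$ in time. For the dominator requirement (Property 3) I would take $D_i$ to be the union of the vertices carrying a red pebble at the \emph{start} of phase $i$ with the vertices \emph{read in} during phase $i$. As a phase holds at most $S$ red pebbles at any instant and performs at most $S$ I/O moves, $|D_i| \le S + S = 2S$. Domination then follows by walking backwards along any path from a graph input to a vertex of $V_i$: each predecessor consumed in a phase-$i$ computation must be red at that moment, hence was red at the start of the phase, read in during the phase, or itself computed earlier within the phase; iterating the last case along the path must terminate at a source, whose red pebble can only come from a start-of-phase pebble or an in-phase read, so in every case the path meets $D_i$.

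The output requirement (Property 4) is handled symmetrically: a vertex of $V_i$ with no immediate successor inside $V_i$ has a value that must survive the phase boundary, hence is recorded either among the at most $S$ red pebbles present at the \emph{end} of the phase or among the at most $S$ vertices \emph{written out} during it, whence $|O_i| \le 2S$. With all four properties verified, $\{V_i\}$ is a valid $2S$-partition, so by the definition of $P(2S)$ as the minimum block count over all such partitions we get $h \ge P(2S)$. Substituting into $Q \ge S(h-1)$ yields
\begin{equation}
Q \ge S\big(P(2S) - 1\big),
\end{equation}
as claimed.

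I expect the main obstacle to be the bookkeeping around \emph{recomputation}: because the pebble game permits a vertex to be computed several times, the naive sets ``vertices computed in phase $i$'' need not be disjoint, and the dominator and output arguments must really be phrased in terms of the individual computation instances rather than the vertices alone. The clean fix is to assign each vertex to a single phase (for instance, the phase of its last computation) and to re-derive the $\le 2S$ bounds under that assignment; verifying that this reassignment simultaneously preserves the acyclicity of the block ordering and the domination property is the delicate step that turns the otherwise transparent counting argument into a rigorous one.
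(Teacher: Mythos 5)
First, a point of context: the paper itself contains no proof of this theorem --- it is stated as a preliminary and the proof is explicitly deferred to Hong and Kung \cite{io_stoc1981}. Your reconstruction follows exactly that classical argument (phases of $S$ I/O moves, one block per phase, dominator bounded by start-of-phase red pebbles plus in-phase reads, output set bounded by end-of-phase red pebbles plus in-phase writes), so the skeleton is the right one, and you correctly single out recomputation as the genuinely delicate point. The problem is that the fix you propose for it does not work.

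Assigning each vertex to the phase of its \emph{last} computation destroys Property 2, and not merely its ``immediacy'': an edge $u \to p$ can then point backwards in phase order, because $p$ may be computed from an early red pebble on $u$ while $u$ is recomputed later. Concretely, take sources $x,y$ and internal vertices $u,p,w,r$ with edges $x \to u \to p$ and $y \to w \to r$, and the execution: phase $i$: read $x$, compute $u$; phase $i+1$: compute $p$, read $y$, compute $w$, compute $r$; phase $i+2$: re-read $x$, recompute $u$, recompute $r$ (with $w$ still resident); pad each phase with unrelated I/O to meet the quota of exactly $S$ moves. Under last-computation assignment, $p,w \in V_{i+1}$ while $u,r \in V_{i+2}$, and the edges $u \to p$ and $w \to r$ form a cycle between the two blocks. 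The standard resolution is the opposite assignment plus a preprocessing step: assign every vertex to the phase of its \emph{first} computation (first read, for sources), which makes Property 2 automatic since computing a child requires its parent to already hold a red pebble; and repair the output-set bound --- which is where recomputation actually bites --- by first deleting from the execution every computation whose red pebble is removed without ever being consumed by a later computation or write (such deletions cost no I/O and preserve completeness). After this pruning, a vertex $v \in V_i$ with no successor in $V_i$ can have no child computed during phase $i$ at all (any such child's first computation would then lie inside phase $i$, putting it in $V_i$), so $v$'s pebble must either survive to the end of the phase or feed a write during it, restoring $|O_i| \le 2S$. With that replacement, your counting argument goes through verbatim and yields $Q \ge S\left(P(2S)-1\right)$.
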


\section{Layer-Wise Lower Bound of Off-Chip Communication}\label{sec:derivation}

We now derive the layer-wise lower bound of the off-chip communication, based on the $S$-partition model~\cite{io_stoc1981} and the relation between convolutions and MMs. 
Typically there are at least three levels in the memory hierarchy of a CNN accelerator: an off-chip DRAM, an on-chip GBuf, and Regs. The red-blue pebble game is still applicable. We  define an \textit{effective on-chip memory} as the maximum on-chip memory that does not contain duplicated data. For example, if the GBuf stores inputs and weights while some Regs store Psums (other Regs store inputs and weights that are copied from the GBuf), the effective on-chip memory refers to the GBuf (storing inputs and weights) plus those Regs which store Psums. A specific implementation may be a sub-optimum, since the red-blue pebble game assumes a homogeneous on-chip memory without any specific splitting.

\subsection{Relation Between Convolutions and MM}

\begin{figure}[t]
  \centering
  \includegraphics[width=.94\columnwidth]{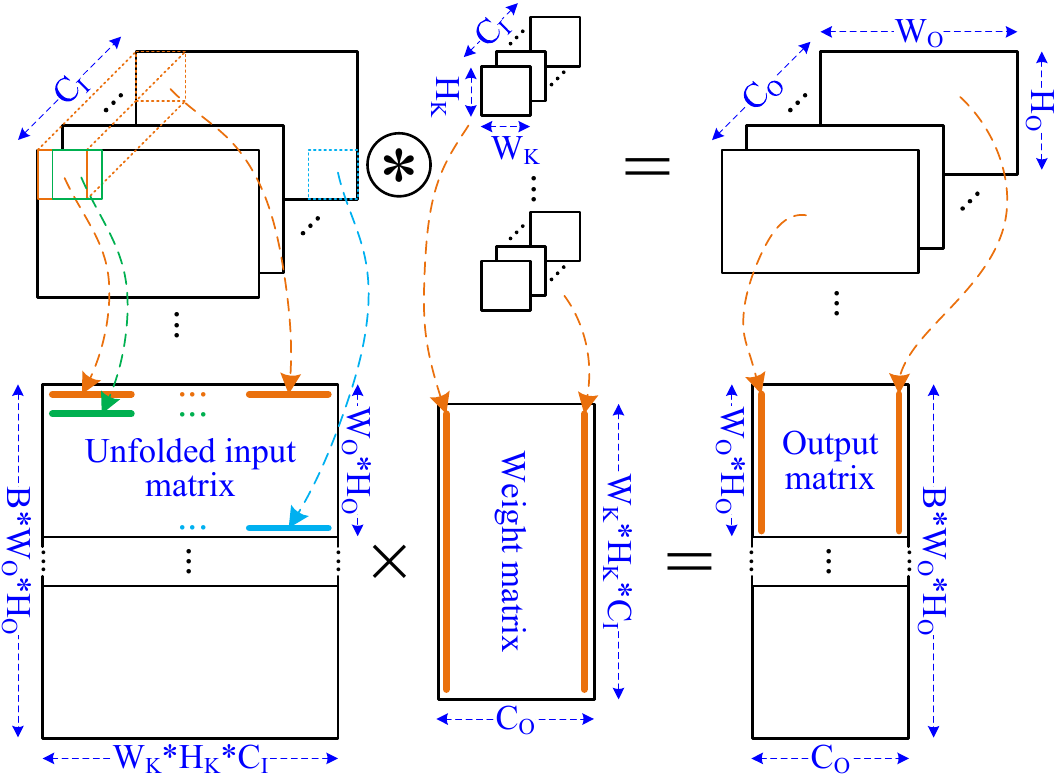}\\
  \graphlabel
  \caption{Convolution-to-MM conversion.}\label{fig:mm}\graphtext
\end{figure}

Fig.~\ref{fig:mm} shows how to convert a  convolutional layer into an MM. We first consider a simple case with batch size 1. The input image is \textit{unfolded} to the input matrix, each row of which contains the inputs in a sliding window. Different rows in the unfolded input matrix correspond to different sliding windows, which also correspond to different locations on the output channels. All  kernels are \textit{reshaped} into a weight matrix, each column of which contains the weights of a kernel. The output image is \textit{reshaped} into an output matrix, each column of which contains the outputs of an output channel. Reshaping means reorganizing the elements without adding or removing elements. If the batch size is $B$, we just stack up $B$ unfolded input matrices and $B$ output matrices, respectively, while the weight matrix remains unchanged. The stacked input and output matrices are still called unfolded input matrix and output matrix, respectively.

The convolution-to-MM conversion is only logic equivalent but not algorithm  equivalent. The difference is that, in a convolutional layer, inputs in overlapped sliding windows can be reused. This level of data reuse does not exist in MMs. This is why the input matrix is ``\textit{unfolded}'' instead of ``\textit{reshaped}''. In the conversion, the input images are unfolded by expanding all sliding windows, i.e., the common inputs in overlapped sliding windows have multiple explicit copies in the unfolded input matrix. We define $R$ to denote the reuse number of each input  by WndR, whose maximum value is\equtext
\begin{equation}\label{eq:reuse}
  R=\frac{W_K\times H_K}{D^2}\equtext
\end{equation}
where $D$ is the stride size. We will show that the derived  lower bound of  the off-chip communication relies on $R$.


%

One may argue that there are other data reuse patterns in a convolutional layer (e.g., InR, WtR, etc.). 
These data reuse patterns are actually included in the converted MM. For example, each column of the weight matrix can be shared by multiple  rows in the unfolded input matrix, which is WtR, and each row of the input matrix can be reused by multiple  columns in the weight matrix, which is InR. From the conversion process, it is clear that the computational process of a convolutional layer is not changed except for WndR, because except for that the input matrix is unfolded, the other matrices are just reshaped. This  implies that, although a convolutional layer involves 7 levels of loops, it only has one more level of data reuse than MMs. In order to take into account WndR in the converted MM, we have defined $R$ to denote the reuse number of each input  by WndR. If $R$ is 1 (i.e., no WndR), 
a convolutional layer is exactly equivalent to an MM. Since a fully-connected (FC) layer is also  equivalent to an MM, our conclusion with $R\!=\!1$ can be applied to FC layers. Note that the convolution-to-MM conversion is a only a logical operation used for our derivation. It is not a real operation in our dataflow or architecture.

\subsection{Theoretical Derivation}

Here we provide the theoretical derivation for the layer-wise lower bound of the off-chip communication. We consider a general case in which the on-chip memory cannot hold all  inputs or all  weights of a convolutional layer. Otherwise, it is just the ideal case (both the inputs and the weights are read exactly once).


\begin{lemma}\label{lem:tnodes}
If a convolutional layer is represented by a DAG, the number of internal and output nodes in the DAG is $2BW_OH_OC_OW_KH_KC_I$.
\end{lemma}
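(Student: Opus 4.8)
The plan is to count operation nodes directly. By the DAG convention of Section~\ref{sec:pebble}, every node is either an \emph{input} node (a raw input or weight, not produced by any operation) or the output of an arithmetic operation; the internal and output nodes are precisely the latter. So the lemma reduces to counting the arithmetic operations performed by the layer. First I would fix how each output is computed: each of the $BW_OH_OC_O$ outputs is an inner product of length $C_IW_KH_K$ (over the input channels and the $W_K\!\times\!H_K$ kernel window), which is exactly the MM picture of Fig.~\ref{fig:mm} with inner dimension $k=C_IW_KH_K$ and $BW_OH_OC_O$ output entries.

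Next I would count the operations per output. I model each inner product as a chain of multiply-accumulate steps: the accumulator starts at a zero input node, and for each of the $k$ summands I form one product (a multiplication node) and add it into the running partial sum (an addition node). Hence each output contributes exactly $k$ multiplication nodes together with $k$ addition nodes, i.e.\ $2k$ operation nodes; the $k\!-\!1$ intermediate partial sums are internal nodes and the last partial sum is the output node. Multiplying by the number of outputs gives $2k\cdot BW_OH_OC_O = 2BW_OH_OC_OW_KH_KC_I$ internal and output nodes, as claimed. Because WndR only replicates shared inputs across different windows and never alters the arithmetic, this count is unchanged by the convolution-to-MM conversion, which serves as a consistency check on the argument.

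The one delicate point---and the step I expect to be the crux---is pinning the leading constant to exactly $2$. A bare reduction of $k$ products requires only $k\!-\!1$ additions, which would give $2k\!-\!1$ operation nodes per output and break the clean formula. The fix is the multiply-accumulate convention: counting the accumulator's initial value as a zero \emph{input} node (so it is excluded from the internal/output tally) turns the reduction into $k$ genuine additions, matching the hardware MAC datapath the paper targets. I would state this convention explicitly up front, after which the result is just a product of the loop bounds and no further estimation is needed.
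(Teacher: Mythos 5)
Your proposal is correct and takes essentially the same route as the paper: count $W_KH_KC_I$ multiplication nodes and $W_KH_KC_I$ add nodes per output (the paper calls each such group an \emph{add tree}), note that these operation nodes cannot be shared across different outputs, and multiply by the $BW_OH_OC_O$ outputs. Your explicit multiply-accumulate convention for getting $k$ rather than $k-1$ additions is a point the paper's proof simply asserts without justification, so your treatment is if anything slightly more careful.
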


\begin{figure}[t]
  \centering
  \includegraphics[width=.94\columnwidth]{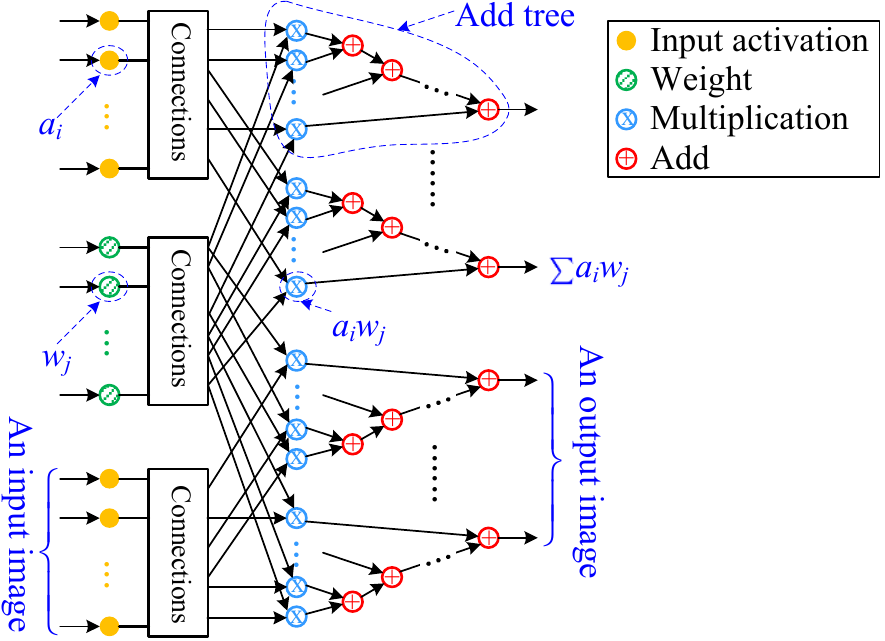}\\
  \graphlabel
  \caption{DAG describing a convolutional layer.}\label{fig:dag}\graphgraph
\end{figure}

\begin{proof}
Fig.~\ref{fig:dag} illustrates a DAG that describes a  convolutional layer. It has three levels. The first level comprises all input nodes, including inputs and weights. The second level is composed of all multiplication nodes. The last level is composed of all add nodes. The multiplication and add nodes associated with the same output form an \textit{add tree} (multiplication nodes are also included in add trees). The detailed connections between the input nodes and the multiplication nodes are not shown because we are not interested in them.

There are $BW_OH_OC_O$ outputs in total. Each output  is the inner product of two vectors, both of which are of length $W_KH_KC_I$. Hence, there are $W_KH_KC_I$ multiplications nodes and $W_KH_KC_I$ add nodes in an add tree. Since no internal node can be shared by different add trees, the number of internal and output nodes is $2BW_OH_OC_OW_KH_KC_I$.
\end{proof}

In Fig.~\ref{fig:dag}, the inputs are marked as $a_1,a_2,\cdots$, and the weights are marked as $w_1,w_2,\cdots$. Each multiplication node produces a \textit{term} $a_iw_j$. Note that the sum of multiple terms (e.g., $a_1w_1\!+\!a_2w_2$) is not called a term. Instead, a term belongs to a sum (i.e., an add tree). We have the following lemma.

\begin{lemma}\label{lem:term}
Let $T(S)$ be the maximum number of terms that can be produced  in no more than $S$ add trees by using no more than $S$ on-chip memory units. For a convolutional layer with each input reused by $R$ times by WndR, $T(S)\!=\!O(S\sqrt{RS})$.
\end{lemma}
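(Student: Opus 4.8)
The plan is to reinterpret each term as a lattice point in a three–dimensional index space and then bound the number of such points that can be generated in one phase by a Loomis--Whitney--type projection inequality, exactly as in the classical Hong--Kung argument for matrix multiplication, but with an extra factor introduced by WndR.

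First I would set up coordinates. Using the convolution-to-MM conversion of Fig.~\ref{fig:mm}, every term $a_iw_j$ produced by a multiplication node is identified with a triple $(m,n,k)$, where $m$ indexes the row of the unfolded input matrix (the sliding window, equivalently the output location), $n$ indexes the column of the weight matrix (the output channel), and $k\in\{1,\dots,W_KH_KC_I\}$ is the position inside the inner product. Let $\mathcal{T}$ be the set of triples corresponding to the terms produced in a single phase. Its three axis-plane projections have concrete meanings: $\pi_{mk}(\mathcal{T})$ is the set of (logical) unfolded-input entries consumed, $\pi_{kn}(\mathcal{T})$ is the set of weight entries consumed, and $\pi_{mn}(\mathcal{T})$ is the set of add trees (outputs) touched.

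Next I would translate the two resource budgets into bounds on these projections. The add-tree budget gives $|\pi_{mn}(\mathcal{T})|\le S$ directly. The memory budget is where WndR enters: the on-chip units store \emph{physical} inputs and weights, and by (\ref{eq:reuse}) a single stored input supplies up to $R$ distinct rows of the unfolded matrix. Hence if the phase keeps $a$ physical inputs and $w$ weights with $a+w\le S$, then $|\pi_{mk}(\mathcal{T})|\le Ra$ while $|\pi_{kn}(\mathcal{T})|\le w$.

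Finally I would invoke the discrete Loomis--Whitney inequality $|\mathcal{T}|^2\le|\pi_{mk}(\mathcal{T})|\,|\pi_{kn}(\mathcal{T})|\,|\pi_{mn}(\mathcal{T})|$, substitute the three bounds, and maximize $(Ra)w$ subject to $a+w\le S$. AM--GM gives $aw\le S^2/4$, hence $|\mathcal{T}|^2\le\tfrac14 R S^3$ and $|\mathcal{T}|\le\tfrac12\sqrt{R}\,S^{3/2}=O(S\sqrt{RS})$, as claimed. I expect the main obstacle to be the memory-accounting step rather than the final optimization: one must argue that it is the effective (logical) size of the input projection, not the number of stored words, that enters the inequality, so that WndR inflates $|\pi_{mk}(\mathcal{T})|$ by exactly the factor $R$ from (\ref{eq:reuse}) and thereby yields the $\sqrt{R}$. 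A subsidiary point is that inputs near the image boundaries are reused fewer than $R$ times, which only tightens the projection bound, so taking the maximum reuse $R$ is safe for an upper bound.
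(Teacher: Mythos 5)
Your proof is correct, and it takes a genuinely different route from the paper's. The paper works directly with rectangular sum blocks in ${\bf C}$ (Fig.~\ref{fig:term}): it maximizes \eqref{eq:term} under the memory constraint \eqref{eq:cons1} (which, unlike your accounting, also charges the outputs $u_iz_i$ against the $S$ memory units), splits into cases according to whether the ${\bf A}_i$ or ${\bf B}_i$ blocks overlap, and pushes through the generalized-mean inequalities \eqref{eq:d1}--\eqref{eq:d6} to show that two blocks never beat a single merged block, obtaining the bound $\frac{S\sqrt{RS}}{3\sqrt{3}}$ together with the attainment condition $u\approx Rz$. You instead lift each term to a lattice point $(m,n,k)$ and invoke the discrete Loomis--Whitney projection inequality (the standard tool in Hong--Kung-style matrix-multiplication lower bounds, in the spirit of Irony--Toledo--Tiskin and Ballard et al.), the only convolution-specific ingredient being that WndR lets $a$ stored physical inputs cover at most $Ra$ logical entries of the unfolded matrix, so $|\pi_{mk}(\mathcal{T})|\le Ra$; this is exactly the right place for $R$ to enter, and your boundary remark (reuse below $R$ only shrinks the projection) correctly disposes of the edge-effect worry. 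What your route buys: it handles an arbitrary set of produced terms with no rectangularity assumption, no overlap case analysis, and no need for the paper's somewhat informal closing step extending two blocks ``to the general case with multiple blocks'' --- it is both shorter and more airtight, and since your constraints ($a+w\le S$ plus $|\pi_{mn}(\mathcal{T})|\le S$) are weaker than \eqref{eq:cons1}, your bound implies the paper's version a fortiori. What the paper's route buys: a sharper constant ($\frac{1}{3\sqrt{3}}$ versus your $\frac12$, immaterial for the $O(\cdot)$ claim and partly explained by its stricter constraint set), and, more importantly, an explicit optimizer ($u_i=k_i=\frac{\sqrt{SR}}{\sqrt{3}}$, $z_i=\frac{\sqrt{S}}{\sqrt{3R}}$, a single block with $u\approx Rz$), which is precisely what Section~\ref{sec:dataflow} reuses to set the tiling sizes of the communication-optimal dataflow. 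Your argument can recover that shape too, but only by additionally working out the equality conditions of Loomis--Whitney and of your AM--GM step (a single box-shaped $\mathcal{T}$ with $a=w=S/2$), which you would need to make explicit for the lemma to serve that downstream role.
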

\begin{proof}
The proof is based on the relation between convolutions and MMs. We use $\bf A$, $\bf B$ and $\bf C$ to denote the unfolded input matrix, the weight matrix, and the output matrix, respectively. Then the MM is represented as ${\bf AB}\!=\!\bf C$. The produced terms using no more that $S$ on-chip memory units can be arbitrarily distributed in $\bf C$. Note that an element in $\bf C$  is the sum of multiple terms belonging to the said sum, so the produced terms may overlap in $\bf C$.

We first demonstrate that in order to maximize the produced terms, the produced terms  must form a single block or be able to form a single block. This phenomenon can be explained intuitively. We consider any two elements (i.e., two sums) in $\bf C$, each of which is a product of a row vector in $\bf A$ and a column vector in $\bf B$. Obviously, if we move one element  such that the two elements are overlapped in $\bf C$ (then the input vectors are also overlapped), the number of produced terms keeps unchanged but the number of required on-chip memory units is minimized. In what follows, we provide a mathematical proof for this statement.

\begin{figure}[t]
  \centering
  \includegraphics[width=.6\columnwidth]{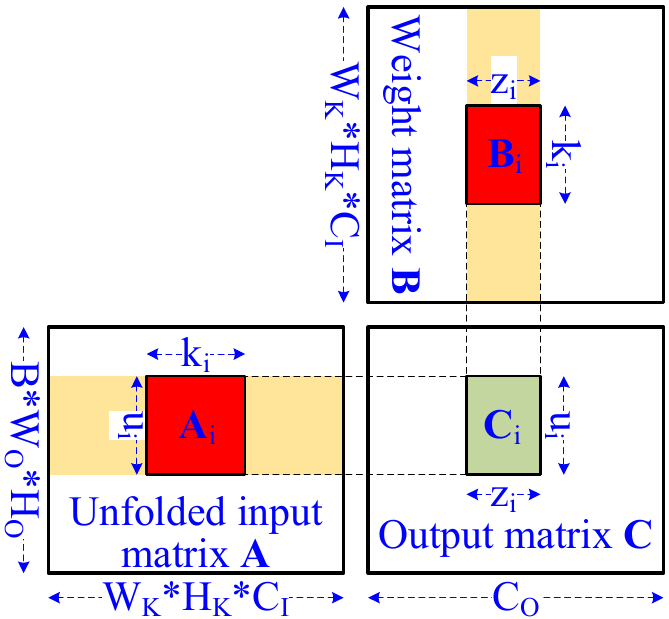}\\
 \graphlabel
  \caption{Converted matrix multiplication.}\label{fig:term}\graphtext
\end{figure}

Without loss of generality, we consider any two non-overlapped rectangular sum blocks ${\bf C}_1$ and ${\bf C}_2$ in $\bf C$. 
The size of  ${\bf C}_i$ is $u_i\!\times\! z_i$ (the minimum  size is $1\!\times\! 1$). Block ${\bf C}_i$ is the product of two corresponding blocks in $\bf A$ and $\bf B$, respectively, say ${\bf A}_i$ and ${\bf B}_i$, which are of sizes $u_i\!\times \!k_i$ and $k_i\!\times\! z_i$, as shown in Fig.~\ref{fig:term}. Then each element in ${\bf C}_i$ is the sum of $k_i$ terms. Note that ${\bf A}_1$ and ${\bf A}_2$  can overlap, or ${\bf B}_1$ and ${\bf B}_2$  can overlap, but they cannot overlap at the same time (otherwise ${\bf C}_1$ and ${\bf C}_2$ will overlap). 
 According to the definition of $T(S)$, we intend to maximize\equtext
\begin{equation}\label{eq:term}
T(S)=u_1k_1z_1+u_2k_2z_2.\equtext
\end{equation}

If there is no overlap in any ${\bf A}_i$ or ${\bf B}_i$ blocks, since all inputs and outputs should be in no more than $S$ on-chip memory units, we have the following constraint\equtext
\begin{equation}\label{eq:cons1}
  \frac{{{u_1}{k_1}}}{R} + \frac{{{u_2}{k_2}}}{R} + {z_1}{k_1} + {z_2}{k_2} + {u_1}{z_1} + {u_2}{z_2} \le S\equtext
\end{equation}
where the first two terms on the left-hand side are reduced by a factor $R$ because an element in $\bf A$ can be reused by at most $R$ times by WndR. Equation~\eqref{eq:cons1} also implies that the produced terms are in no more than $S$ add trees.
Based on the generalized mean inequality~\cite{mean_url}, we have\equtext
\begin{equation}\label{eq:d1}
  \frac{{{u_i}{k_i}}}{R} + {z_i}{k_i} + {u_i}{z_i} \ge \frac{3}{{\sqrt[3]{R}}}{\left( {{u_i}{k_i}{z_i}} \right)^{\frac{2}{3}}}\equtext
\end{equation}
where the equality holds iff $\frac{{{u_i}{k_i}}}{R}\! =\! {z_i}{k_i} \!=\! {u_i}{z_i}$. Combining \eqref{eq:cons1} and \eqref{eq:d1}, we have\equtext
\begin{equation}\label{eq:d2}
  {\left( {{u_1}{k_1}{z_1}} \right)^{\frac{2}{3}}} + {\left( {{u_2}{k_2}{z_2}} \right)^{\frac{2}{3}}} \le \frac{{S\sqrt[3]{R}}}{3}.\equtext
\end{equation}
Let $t_i\!\buildrel \Delta \over = \!\left(u_ik_iz_i\right)^{\frac{2}{3}}$, then we have formulated a maximum value problem to maximize $T(S)\!=\!t_1^{\frac{3}{2}} + t_2^{\frac{3}{2}}$ under the constraint $t_1\!+\!t_2 \!\le\! \frac{{S\sqrt[3]{R}}}{3}$. Since $T(S)\!=\!t_1^{\frac{3}{2}} \!+\! t_2^{\frac{3}{2}}$ is continuous and  strictly convex, its upper bound is reached on the boundary of the variables' value range, i.e., when there is one $t_i\!=\!\frac{{S\sqrt[3]{R}}}{3}$ and the other $t_i$ is 0. Accordingly, the upper bound of $T(S)$ is $\frac{S\sqrt{RS}}{3\sqrt{3}}\!=\!O(S\sqrt{RS})$. The upper bound can be reached iff there is only one $i$ such that $z_i\!=\!\frac{\sqrt{S}}{\sqrt{3R}}$ and $u_i\!=\!k_i\!=\!\frac{\sqrt{SR}}{\sqrt{3}}$, implying that there is only a single block in $\bf C$.

If there is overlap between ${\bf A}_1$ and ${\bf A}_2$ or between ${\bf B}_1$ and ${\bf B}_2$, without loss of generality, we assume that ${\bf A}_1$ and ${\bf A}_2$  are overlapped (then ${\bf B}_1$ and ${\bf B}_2$ cannot overlap). The constraint is as follows\equtext
\begin{equation}\label{eq:cons2}
  \begin{aligned}
\frac{{{u_1}{k_1}}}{R} + {z_1}{k_1} + {z_2}{k_2} + {u_1}{z_1} + {u_2}{z_2} &\le S,\\
\frac{{{u_2}{k_2}}}{R} + {z_1}{k_1} + {z_2}{k_2} + {u_1}{z_1} + {u_2}{z_2} &\le S.
\end{aligned}\equtext
\end{equation}
This is equivalent to
\begin{equation}\label{eq:d3}
  \begin{aligned}
{u_1}{k_1} &\le R\left( {S - {z_1}{k_1} - {z_2}{k_2} - {u_1}{z_1} - {u_2}{z_2}} \right),\\
{u_2}{k_2} &\le R\left( {S - {z_1}{k_1} - {z_2}{k_2} - {u_1}{z_1} - {u_2}{z_2}} \right).
\end{aligned}\equtext
\end{equation}
Then\equtext
\begin{equation}\label{eq:d4}
 \begin{aligned}
&T(S) = \sqrt {{u_1}{k_1}} \sqrt {{u_1}{k_1}} {z_1} + \sqrt {{u_2}{k_2}} \sqrt {{u_2}{k_2}} {z_2}\\
 &\le\! \!\sqrt {R\left( {S \!\!-\!\! {z_1}{k_1}\!\! -\!\! {z_2}{k_2} \!\!-\!\! {u_1}{z_1} \!\!-\!\! {u_2}{z_2}} \right)} \!\left( \!{\sqrt {{u_1}{k_1}} {z_1}\! \!+\!\! \sqrt {{u_2}{k_2}} {z_2}}\! \right)\!.
\end{aligned}\equtext
\end{equation}
Since ${z_i}{k_i} \!+\! {u_i}{z_i} \!\ge\! 2\sqrt {{k_i}{u_i}} {z_i}$, we have\equtext
\begin{equation}\label{eq:d5}
\begin{aligned}
  &T(S) \\
  &\le\!\! \sqrt {R\left( {S\!\! -\!\! 2\sqrt {{k_1}{u_1}} {z_1} \!\!-\!\! 2\sqrt {{k_2}{u_2}} {z_2}} \right)} \!\left(\! {\sqrt {{k_1}{u_1}} {z_1}\!\! +\!\! \sqrt {{k_2}{u_2}} {z_2}} \!\right)
  \end{aligned}\equtext
\end{equation}
where the equality holds iff $k_i\!=\!u_i$. Let $t\!\buildrel \Delta \over = \!\sqrt {{k_1}{u_1}} {z_1} +$ $ \sqrt {{k_2}{u_2}} {z_2}$. Based on the generalized mean inequality, we get\equtext
\begin{equation}\label{eq:d6}
  \begin{aligned}
&T(S) \le \sqrt {R(S - 2t)} \,t = \sqrt {R(S - 2t)} \sqrt t \sqrt t \\
& \le \sqrt R {\left( {\frac{{S - 2t + t + t}}{3}} \right)^{\frac{3}{2}}} = \frac{{S\sqrt {RS} }}{{3\sqrt 3 }} = O(S\sqrt {RS} )
\end{aligned}\equtext
\end{equation}
where the last equality holds iff $t\!=\!\frac{S}{3}$. When tracing back the derivation process, the upper bound of $T(S)$ can be reached iff ${k_1} \!= \!{k_2}\! =\! {u_1}\! =\! {u_2}\! =\! \frac{{\sqrt {SR} }}{{\sqrt 3 }}$ and ${z_1} \!+ \!{z_2} \!=\! \frac{{\sqrt S }}{{\sqrt {3R} }}$, where the latter condition implies that ${\bf C}_1$ and ${\bf C}_2$ are able to be merged into a single block, and the resulting case is identical to the case without overlap.

We have proved that for any two  blocks in $\bf C$, only when they form a single block or be able to form a single block, the number of produced terms is maximized. When extending this conclusion to the general case with multiple blocks in $\bf C$, they should also form a single block or be able to form a single block (say, ${\bf C}_1$) with $u_1\!=\!Rz_1$ held. If so, the upper bound of $T(S)$, $\frac{{S\sqrt {RS} }}{{3\sqrt 3 }}\! =\! O(S\sqrt {RS} )$, can be reached.
\end{proof}





\begin{lemma}\label{lem:inodes}
Let $\{V_1,V_2,\!\cdots\!,V_h\}$ be an $S$-partition of the DAG associated with a convolutional layer. Each $V_i$ ($1\!\le\! i\! \le \!h$) can have at most $2T(S)\!+\!S$ internal and output nodes.
\end{lemma}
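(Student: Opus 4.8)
The plan is to split the internal and output nodes of $V_i$ into the two classes that appear in the DAG of Fig.~\ref{fig:dag}: the multiplication nodes, each of which produces one term $a_iw_j$, and the addition nodes that accumulate these terms along the add trees. Writing $m_i$ for the number of multiplication nodes and $n_i$ for the number of addition nodes in $V_i$, the goal is to show $m_i\le T(S)$ and $n_i\le T(S)+S$, so that $m_i+n_i\le 2T(S)+S$. Both bounds will be obtained by feeding the two structural properties of an $S$-partition---the dominator set $D_i$ with $|D_i|\le S$ (Property 3) and the output set $O_i$ with $|O_i|\le S$ (Property 4)---into Lemma~\ref{lem:term}.

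For the multiplication nodes I would argue through the convolution-to-MM picture. The terms produced inside $V_i$ occupy a set of positions in the product $\mathbf{AB}=\mathbf{C}$; the inputs (entries of $\mathbf{A}$) and weights (entries of $\mathbf{B}$) that these terms consume all lie on paths from the DAG inputs into $V_i$ and are therefore captured by the dominator set, while the sums into which the terms accumulate are exactly the $\mathbf{C}$-entries that either enter $V_i$ from $D_i$ or leave it through $O_i$. Thus the on-chip footprint needed to generate these terms---inputs (reduced by the reuse factor $R$), weights, and output sums---is bounded in terms of $|D_i|$ and $|O_i|$, i.e.\ by $O(S)$ units. This is precisely the hypothesis of Lemma~\ref{lem:term}, so the number of terms, and hence $m_i$, is $O(T(S))$; tracking the exact constant amounts to deciding whether the relevant budget is $S$ or $2S$, which does not change the order $O(S\sqrt{RS})$.

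For the addition nodes I would exploit the one-term-per-addition structure recorded in Lemma~\ref{lem:tnodes}: an add tree summing $L$ terms contains exactly $L$ addition nodes, so in the linear accumulation each addition node consumes exactly one term and each term is consumed by at most one addition node. This yields an injection from the addition nodes of $V_i$ into the terms they consume. Splitting those consumed terms into the ones produced inside $V_i$ (at most $m_i$ of them) and the ones entering $V_i$ from outside, and bounding the latter through the boundary of $V_i$ captured by the dominator/output sets ($\le S$), gives $n_i\le m_i+S\le T(S)+S$. Adding the two estimates completes the bound $m_i+n_i\le 2T(S)+S$.

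The hard part will be the second half of the term bound: making rigorous the passage from the path-based dominator and output sets of the $S$-partition to the homogeneous $S$-unit memory footprint assumed by Lemma~\ref{lem:term}, and in particular bounding by $S$ the number of externally produced terms (equivalently, partial sums) that cross into $V_i$. Because a single dominating input can feed many terms, this crossing count cannot be read off naively from $|D_i|$; it has to be argued through the partial-sum boundary of $V_i$ (the $\le S$ nodes of $O_i$ together with the incoming sums dominated by $D_i$). Once this boundary accounting is pinned down, the two node classes combine immediately, and keeping the constants at the $O(S\sqrt{RS})$ level makes the stated $2T(S)+S$ a clean representative bound.
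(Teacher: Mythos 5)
Your decomposition into multiplication nodes $m_i$ and addition nodes $n_i$, with $m_i\le T(S)$ obtained from the dominator set and the add-tree count, is essentially the paper's own route (the paper uses Property 4 to confine $V_i$ to at most $S$ add trees, Property 3 to get $|D_i|\le S$, and the definition of $T(S)$ to bound the formable terms; your worry about whether the budget is $S$ or $2S$ is a looseness the paper itself shares). The genuine gap is the step you yourself flagged as the hard part: the claim that at most $S$ externally produced terms can cross into $V_i$, hence $n_i\le m_i+S$. That claim is not merely hard to make rigorous --- it is false, because the crossing count is not controlled by the sizes of the boundary sets $D_i$ and $O_i$. Concretely, in the MM picture with $R=1$, take $t=\sqrt{S}$ rows of $\mathbf{A}$ (index set $I$), $t$ columns of $\mathbf{B}$ (index set $J$), and let $K$ be the \emph{first} $\sqrt{S}/2$ indices of the reduction dimension. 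Put $D_i=\{a_{ik}: i\in I, k\in K\}\cup\{b_{kj}: k\in K, j\in J\}$, so $|D_i|=S$; place all multiplication nodes $a_{ik}b_{kj}$ in earlier subsets of the partition; and let $V_i$ consist solely of the addition nodes $s^{(ij)}_k$ ($i\in I$, $j\in J$, $k\in K$), i.e., the first $|K|$ partial sums of the $t^2=S$ add trees $c_{ij}$. Every path from an input into $V_i$ passes through some $a_{ik'}$ or $b_{k'j}$ with $k'\in K$ (this is why $K$ must be a prefix, so the partial-sum chains are also cut), so Property 3 holds; the output set is the $S$ nodes $s^{(ij)}_{|K|}$, so Property 4 holds; yet $n_i=t^2|K|=\tfrac{1}{2}S^{3/2}$ while $m_i=0$. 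A set of $S/2$ dominating inputs and $S/2$ dominating weights jointly covers $\Theta(S^{3/2})$ crossing terms, because coverage is a property of factor \emph{pairs}, not of individual boundary nodes; so $n_i\le m_i+S$ fails by a factor of order $\sqrt{S}$.

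The repair is exactly the paper's argument, which bounds $n_i$ through $T(S)$ rather than through $m_i$: every term consumed by an addition node of $V_i$, wherever its multiplication node lives, must itself be formable from $D_i$. Indeed, for the two-edge paths input $\to$ mult node $\to$ add node to be cut by $D_i$, either the add node or the mult node lies in $D_i$ (at most $S$ such exceptions in total), or \emph{both} factors of the term are dominated by $D_i$; and all these terms lie in the at most $S$ add trees that $V_i$ touches. The definition of $T(S)$ (applied to $D_i$ and these $S$ add trees) then gives $n_i\le T(S)+S$ directly, and together with $m_i\le T(S)$ this yields the stated $2T(S)+S$. As a side remark, your instinct about constants was sound: the construction above also exceeds the literal bound $2T(S)+S$ when $T(S)$ is taken with the paper's exact value $\frac{S\sqrt{RS}}{3\sqrt{3}}$ (whose memory budget in Lemma~\ref{lem:term} counts the outputs as well), so strictly the lemma should be read with a $T(2S)$-type quantity; since \eqref{eq:spart} and Theorem~\ref{theo:conv} are asymptotic, nothing downstream is affected.
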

\begin{proof}
By Property 4 of the $S$-partition model, the output set of $V_i$ has at most $S$ nodes. This implies that $V_i$ can have nodes in at most $S$ add trees. To bound the internal and output nodes that $V_i$ can have, we only need to consider  $S$ add trees. By property 3 of the $S$-partition model, there is a dominator set $D_i$ for $V_i$ that has no more than $S$ nodes. By the definition of $T(S)$, from $D_i$ at most $T(S)$ terms can be formed in $S$ add trees. $T(S)$ terms can form at most $T(S)$ add nodes in $V_i$. Considering that nodes in $D_i$ ($|D_i|\!\le\! S$) can possibly be internal or output nodes of $V_i$, $V_i$ can have at most $2T(S)\!+\!S$ internal and output nodes.
\end{proof}

Based on Lemmas~\ref{lem:tnodes}, \ref{lem:term} and \ref{lem:inodes}, for a DAG that describes a convolutional layer, the minimum number of subsets that any $S$-partition must have is\equtext
\begin{equation}\label{eq:spart}
  P(S)=\Omega \left( \frac{B{W_O}{H_O}{C_O}{W_K}{H_K}{C_I}}{S\sqrt{RS}} \right).\equtext
\end{equation}
According  to Theorem~\ref{theo:pebble}, we get the following theorem, which is also the key conclusion of this paper.
\begin{theorem}\label{theo:conv}
The lower bound of the off-chip communication of a convolutional layer is\equtext
\begin{equation}\label{eq:conv}
  Q_{{\rm DRAM}}=\Omega \left( {\frac{{B{W_O}{H_O}{C_O}{W_K}{H_K}{C_I}}}{{\sqrt {RS} }}} \right).\equtext
\end{equation}
\end{theorem}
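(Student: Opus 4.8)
The plan is to assemble the three preceding lemmas with Theorem~\ref{theo:pebble}, since the genuinely hard analytical work has already been discharged in Lemma~\ref{lem:term}. The overall strategy has two moves: first, bound from below the number of subsets $h$ that any $S$-partition of the convolution DAG must contain, which directly yields the growth rate of $P(S)$ asserted in equation~\eqref{eq:spart}; second, feed $P(2S)$ into the red-blue pebble game bound of Theorem~\ref{theo:pebble} to convert this combinatorial partition bound into the desired communication bound.

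First I would run a simple counting argument. By Lemma~\ref{lem:tnodes} the DAG has exactly $2BW_OH_OC_OW_KH_KC_I$ internal and output nodes, while by Lemma~\ref{lem:inodes} each subset $V_i$ of any $S$-partition can hold at most $2T(S)+S$ of them. Since the $V_i$ are disjoint and cover $V$ (Property~1), the number of subsets obeys
\begin{equation*}
h \ge \frac{2BW_OH_OC_OW_KH_KC_I}{2T(S)+S},
\end{equation*}
and taking the minimum over all $S$-partitions transfers the same lower bound to $P(S)$. Substituting the ceiling $T(S)=O(S\sqrt{RS})$ from Lemma~\ref{lem:term}, and noting that $S\sqrt{RS}$ dominates $S$ so that $2T(S)+S=O(S\sqrt{RS})$, I recover
\begin{equation*}
P(S)=\Omega\!\left(\frac{BW_OH_OC_OW_KH_KC_I}{S\sqrt{RS}}\right),
\end{equation*}
which is exactly equation~\eqref{eq:spart}.

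Finally I would invoke Theorem~\ref{theo:pebble}, which calls for $P$ evaluated at $2S$. Replacing $S$ by $2S$ merely rescales the denominator by a constant factor, which is absorbed into the $\Omega$-notation, so $P(2S)$ keeps the same asymptotic order as $P(S)$; then $Q_{\rm DRAM}\ge S\,(P(2S)-1)$ gives the stated bound after the leading $S$ cancels one factor of $S$ in the denominator. I expect \emph{no} substantive obstacle in Theorem~\ref{theo:conv} itself: the only things demanding care are bookkeeping points — checking that $2T(S)+S=O(S\sqrt{RS})$, that the $2S$-for-$S$ substitution only changes constants, and that the additive $-1$ is negligible against the leading term (legitimate because every layer dimension $B,W_O,H_O,C_O,W_K,H_K,C_I$ is free to grow). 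The real difficulty, namely establishing the $\Theta(S\sqrt{RS})$ term-count ceiling via the generalized-mean inequality and the single-block optimality argument, already lives entirely in Lemma~\ref{lem:term}, so the present theorem requires no new inequality and is essentially a clean composition of the machinery.
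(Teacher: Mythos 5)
Your proposal is correct and follows essentially the same route as the paper: the paper likewise obtains equation~\eqref{eq:spart} by dividing the node count of Lemma~\ref{lem:tnodes} by the per-subset capacity $2T(S)+S$ of Lemma~\ref{lem:inodes}, then applies Theorem~\ref{theo:pebble} with $P(2S)$ to conclude. The only difference is that you spell out the bookkeeping (the $2T(S)+S=O(S\sqrt{RS})$ absorption, the constant from the $2S$-for-$S$ substitution, and the negligible $-1$) that the paper leaves implicit.
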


The off-chip communication volume of a naive convolution implementation (without any data reuse)  is simply $2BW_OH_OC_OW_KH_KC_I$. The lower bound reduces it by a factor of $\sqrt{RS}$. 
If $R$ is 1, then a convolutional layer is exactly equivalent to an MM. In this case, the reduction factor is $\sqrt{S}$, which is consistent with the communication-optimal implementation of MMs~\cite{io_stoc1981}.

It is worth mentioning that the derived lower bound is in the form of $\Omega$ instead of a precise value. It represents the asymptotic relation between the off-chip communication  and the on-chip memory capacity when the problem scale is large enough. It is possible that some  dataflows can bring less off-chip communication in some special cases (e.g., cases of small workloads).

\section{Communication-Optimal Dataflow}\label{sec:dataflow}

In this section, we elaborate our dataflow with minimized off-chip communication based on the above derivation. The on-chip communication is minimized based on a proposed workload and storage mapping scheme.

\subsection{Dataflow with Minimized Off-Chip Communication}

The dataflow with minimized off-chip communication is derived from the proof process of Lemma~\ref{lem:term}. More precisely, in Fig.~\ref{fig:term}, the output matrix $\bf C$ is partitioned into equal-sized blocks of size $u\!\times\! z$. The block size should satisfy $u\!\approx\!Rz$ and also meet the on-chip memory capacity. 
A block needs the data in the two yellow bands in the unfolded input matrix $\bf A$ and the weight matrix $\bf B$. Actually, 
the communication-optimal implementation of MM is also the blocked method described in Fig.~\ref{fig:term}~\cite{gemm_toms2008}. When we map the blocked implementation back to a convolutional layer, we get Fig.~\ref{fig:imple}.

\begin{figure}[b]
\graphbottomtext
  \centering
  \includegraphics[width=1\columnwidth]{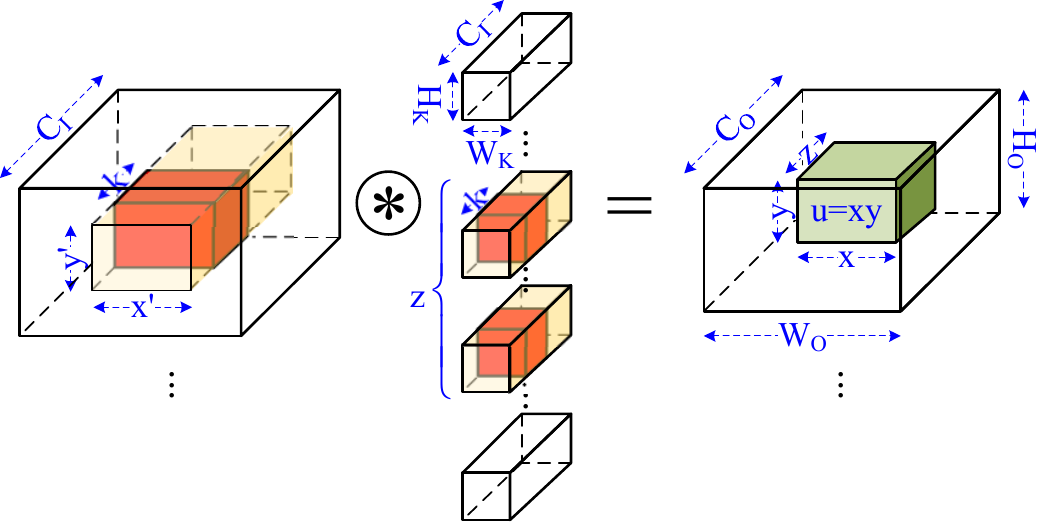}\\
 \graphlabel
  \caption{Dataflow to achieve the lower bound of off-chip communication.}\label{fig:imple}
\end{figure}

A block in $\bf C$ can be mapped to a $z\!\times\! y\times \!x$ ($u\!=\!xy$) 3D sub-matrix in the output images (e.g., the green block in Fig.~\ref{fig:imple}). 
If the output channel dimension is too small (i.e., $W_OH_O\!<\!xy$), the said output sub-matrix may be from multiple (say, $b$) images  in a batch. In this case, $u\!=\!bxy$. To compute the output sub-matrix, the inputs in the corresponding $x'\!\times\! y'$ ($x'\!=\!x\!+\!W_K\!-\!1$ and $y'\!=\!y\!+\!H_K\!-\!1$ if $D\!=\!1$) locations from all input channels of $b$ images (i.e., the yellow block in the input images) and $z$ kernels associated with the partial output channels (i.e., the  kernels colored yellow) are needed, as shown in Fig.~\ref{fig:imple}. Due to the limited on-chip memory, it might be impossible to load all required data at a time. Instead, it is computed by a series of \textit{iterations}.  In each {iteration}, in the yellow blocks, we load the inputs from a portion (say, $k$) of the input channels and the corresponding weights to the on-chip memory, shown by the red blocks in Fig.~\ref{fig:imple}. Then we can perform a partial update to the output sub-matrix.  To complete the output sub-matrix, we continuously load inputs and weights in the yellow blocks and perform partial updates. 
For an output sub-matrix, the needed inputs and weights are read from the off-chip DRAM exactly once. Different output sub-matrices in the output images are computed sequentially in the same way. Fig.~\ref{fig:dataflow} lists the pseudo code of the dataflow. Any quadruple  $\{b,z,y,x\}$ (i.e., tiling sizes) defines an implementation of the dataflow. For a fixed quadruple $\{b,z,y,x\}$, $k$ does not affect the off-chip communication. However,  under a given on-chip memory capacity, smaller $k$ results in larger output sub-matrices, and thus, less output sub-matrices. Hence, $k$ should be  the smallest value, namely, 1.


\begin{figure}[t]
  \centering
  \includegraphics[width=1\columnwidth]{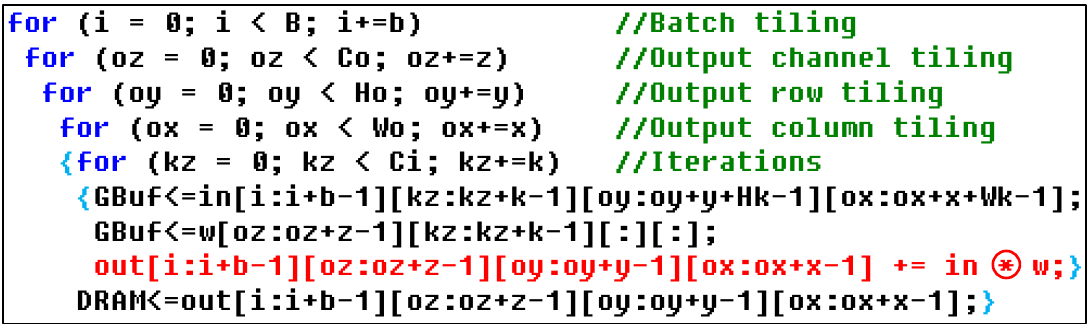}\\
  \graphlabel
  \caption{Pseudo code of the proposed dataflow.}\label{fig:dataflow}\graphtext
\end{figure}

To explain why this dataflow is superior, we notice that it fully exploits OutR, since Psums reside on chip during the computational process and are written back to the off-chip DRAM only once after the computation is finished. WndR (an input is reused by at most $R$ sliding windows on each $x'\!\times \!y'$ plane) is also fully exploited. More importantly, it also takes into account  InR (an input is reused by weights in $z$ kernels) and WtR (a weight is reused by $b\!\times\! x\!\times\! y$ inputs)  at the same time. However, neither InR nor WtR is fully utilized (for example, the loaded inputs are only reused by the loaded weights but not by all kernel weights). This implies that maximizing either InR or WtR is never the optimal solution. In fact, our approach utilizes InR and WtR in a balanced way, generating equal loading volumes of inputs and weights. To sum it up, our dataflow fully exploits OutR and WndR and also combines InR and WtR in a balanced way.


We now verify that the proposed dataflow is able to achieve the lower bound of the off-chip communication.  There are ${(BW_OH_OC_O)}\!/\!{(bxyz)}$ blocks in total in the output images. For each block, $W_KH_KC_Iz$ weights and $bx'y'C_I$ inputs are needed. The DRAM read volume is\equtext
\begin{equation}\label{eq:load}
\begin{aligned}
Q_{\rm Read} = \frac{{B{W_O}{H_O}{C_O}}}{{bxyz}}\left( {{W_K}{H_K}{C_I}z + bx'y'{C_I}} \right)\\
 \approx \frac{{B{W_O}{H_O}{C_O}}}{{bxyz}}\left( {{W_K}{H_K}{C_I}z + \frac{{W_K}{H_K}bxy{C_I}}{R}} \right)
\end{aligned}\equtext
\end{equation}
where the approximation holds if $R\!\approx\! \frac{W_KH_K}{D^2}$, $x'\!\approx\! Dx$, and $y'\!\approx\! Dy$ (when $x\!\!\gg\!\!1$ and $y\!\!\gg\!\!1$, these approximations hold). The DRAM write volume is  $BW_OH_OC_O$, which does not depend on $\{b,z,y,x\}$. If $bxy\!=\!u\!\approx\!Rz$, then\equtext
\begin{equation}\label{eq:dram}
  Q_{\rm DRAM}\!\approx \!\frac{2BW_OH_OC_OW_KH_KC_I}{\sqrt{Ruz}}\!+\!BW_OH_OC_O.\equtext
  \end{equation}
If  $uz\!\approx\! S$ (for minimizing the read volume) and $\frac{W_KH_KC_I}{\sqrt{RS}} $ $\gg \!1$ (for ignoring the write volume), \eqref{eq:dram} satisfies  Theorem~\ref{theo:conv}. 
This implies that, to reach the minimum off-chip communication, most of the effective on-chip memory should be assigned to Psums (since $uz\!\approx\! S$). The fundamental principle behind this conclusion is to \textit{use the least inputs to produce the most outputs}, implying that data reuse is maximized. In addition, for layers with few weights, the lower bound of \eqref{eq:conv} may not be tight, since $\frac{W_KH_KC_I}{\sqrt{RS}}\!\gg \!1$ does not hold and the write volume $BW_OH_OC_O$ cannot be ignored.

In fact, a few prior studies have more or less discussed similar dataflows~\cite{select_date2018,one_isca2016,gpuconv_dac2017}. However, they failed to find the superiorities of this dataflow due to the intuitive analysis and the lack of theoretical basis. 
Ref.~\cite{one_isca2016} evaluated several OutR dataflows but the poor implementations brought $\sim$50\% of the energy consumed by inter-PE communication which is actually unnecessary. Ref.~\cite{select_date2018} considered an OutR dataflow but the tiling sizes were not properly selected. The convolution implementation proposed in~\cite{gpuconv_dac2017} is for graphics processing units rather than for hardware accelerators. Ref.~\cite{one_arxiv2015} proposed a dataflow for CNN accelerators which explicitly converts any convolutional layer into an MM without exploiting WndR.

\subsection{Workload and Storage Mapping with Minimized On-Chip Communication}

Here we focus on the computation of an iteration (i.e., the red line in Fig.~\ref{fig:dataflow}). The required inputs and weights for an iteration have been loaded to the GBuf. The workload of an iteration is mapped to a PE array that consists of $p\!\times \!q$ PEs. 
We will introduce a workload and storage mapping scheme to minimize both GBuf communication and Reg communication.

\subsubsection{Minimizing GBuf Communication}\label{sec:gbuf}

There is a major difference between optimizations of the off-chip communication and the on-chip communication. When minimizing the  off-chip communication, since the problem scale can be arbitrary but the hardware resources are fixed, tiling is necessary and the workload is finished by a number of sequential iterations. For an iteration, however, since the output sub-matrix size is limited by the on-chip memory capacity, it is possible to design the PE array size and the Reg capacity such that the hardware resources can  handle the workload of an iteration at a time. 
This difference  leads to a different lower bound --- the loaded inputs and weights (in the GBuf) can be read exactly once. This is no doubt the minimum possible GBuf communication.




\begin{figure}[t]
  \centering
  \includegraphics[width=1\columnwidth]{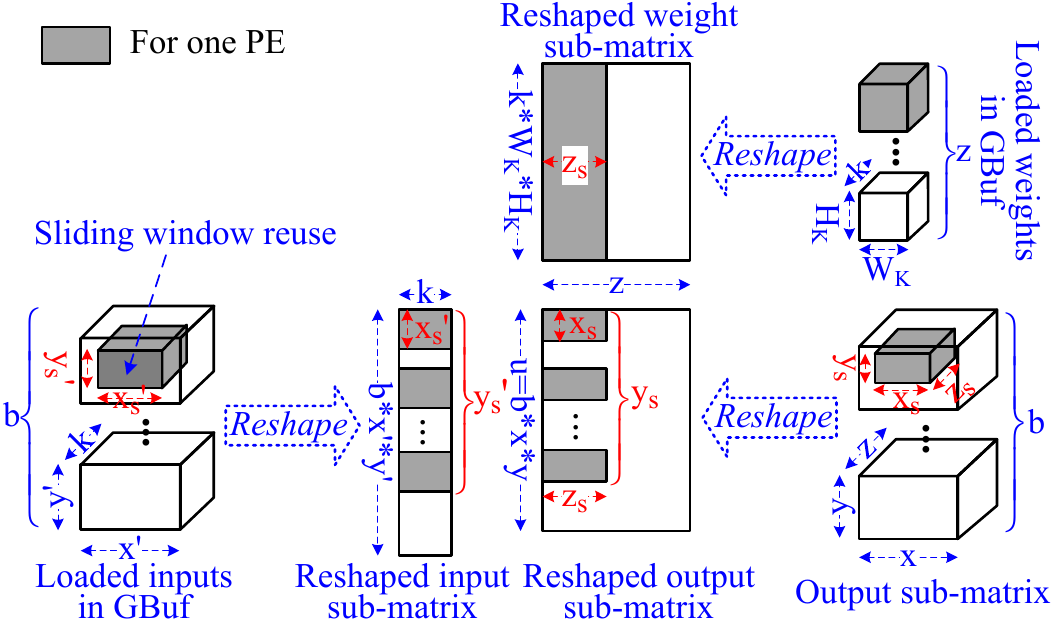}\\
  \graphlabel
  \caption{Workload mapping  in an iteration.}\label{fig:onchip}
\end{figure}

Without loss of generality, a PE is the smallest computational unit that has a multiplication-accumulation (MAC) unit. 
Like the dataflow to minimize the off-chip communication, each PE computes $x_s\!\times \!y_s$ outputs in $z_s$ output channels, so each PE contributes to a $z_s\!\times\! y_s\!\times\! x_s$ ($\ge\!{(bxyz)}\!/\!{(pq)}$) block in the output sub-matrix, as illustrated in Fig.~\ref{fig:onchip}.
The produced outputs by $p\!\times \!q$ PEs should cover the reshaped output sub-matrix ($bxy\!\times\! z$).  Fig.~\ref{fig:subiter}  details the workload mapping for two PEs (PEs 1,1 and 1,2) and the workload of one PE. For one PE, $x_s' y_s' k$ inputs and $z_skW_KH_K$ weights are needed (remember that $k\!=\!1$ in practice). However, we do not need to load them at a time. To enable WndR on each $x_s'\!\times\! y_s'$ plane (see Fig.~\ref{fig:onchip}), $x_s'\!\times\! y_s'$ inputs (for one PE) are loaded to the Regs. Since WndR cannot be applied to weights, we just load $z_s$ weights (for one PE) to the Regs. In an iteration, if updating all outputs once is called a \textit{pass} (the $i$th pass computes the $i$th Psums of all outputs), in each pass, a PE uses $x_sy_s$ inputs and $z_s$ weights to produces $x_sy_sz_s$ Psums (see the workload of one PE shown in Fig.~\ref{fig:subiter}). A pass needs $x_sy_sz_s$  clock cycles. The loaded inputs can be used for $W_KH_K$ passes (because WndR is exploited in the Regs). The loaded weights can be used just for 1 pass, so a PE needs to load $z_s$ weights to the Regs in every pass.  To complete an iteration, $p\!\times\! q$ PEs need $kW_KH_K$ passes.

\begin{figure}[t]
  \centering
  \includegraphics[width=1\columnwidth]{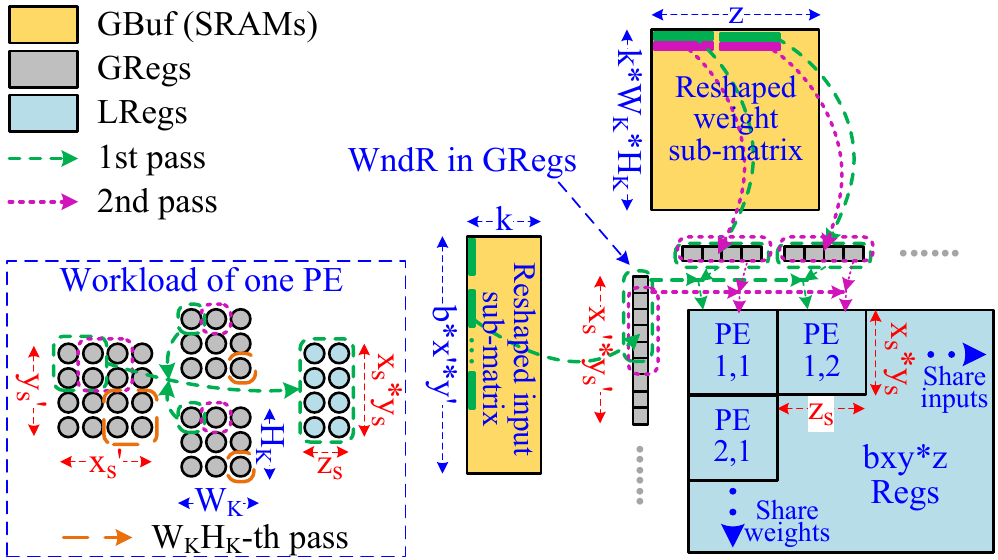}\\
  \graphlabel
  \caption{Workload and storage mapping in an iteration.}\label{fig:subiter}\graphtext
\end{figure}

When considering the PE array, PEs in the same row share the loaded inputs and PEs in the same column  share the loaded weights. As a result,  each weight in the GBuf is read exactly once, reaching the minimum communication. The average read count of each input in the GBuf is ${(x_s'y_s')}\!/\!{(x_sy_s)}$ which is larger than 1. The extra reads are from the halos (i.e., the inputs   out of the $x_s\!\times \!y_s$ rectangle but in the $x_s'\!\times\! y_s'$ rectangle) on each input channel. It is possible to avoid reading extra halos by designing a complicated data transmission network, as an input in a block's halo is also an input of another block, such that each input in the GBuf is also read exactly once. We prefer reading extra halos as it  simplifies the hardware design and regularizes the read patterns. Ideally, the Regs for storing inputs and weights can be global Regs (GRegs) instead of PEs' local Regs (LRegs) (for example, in Fig.~\ref{fig:subiter}, the $x_s'\!\times \!y_s'$ GRegs are shared by the first PE row). In practice, to avoid large fanouts and long latency of long wires, we partition the PE array into groups and each group shares a set of GRegs, with little extra Reg communication.

We choose to store Psums in PEs' LRegs. An alternative way is to store  Psums in the GBuf, which reduces the Reg capacity. However, a Psum needs to  be loaded to a Reg when it is being updated and  stored back to the GBuf when updated, resulting in lots of data shuffling between the GBuf and Regs, and thus, high energy consumption. Hence, storing Psums in the GBuf is not  energy efficient. Keeping Psums in Regs completely avoids GBuf access for Psums. Thus, the  communication between the GBuf and  Regs is minimized. 

By utilizing our workload and storage mapping, the GBuf capacity can be reduced. Since  weights are read row by row from the reshaped weight sub-matrix and  inputs are read column by column from the reshaped input sub-matrix (see Fig.~\ref{fig:subiter}), we do not need to load $kW_KH_K\!\times\! z$ weights and $bx'y'\!\times\! k$ inputs to the GBuf at a time. Instead, we only need one row of SRAMs for weights and one column of SRAMs for inputs. Once data in the GBuf are loaded to the GRegs, the GBuf is used for prefetching data for the subsequent pass. 

\subsubsection{Minimizing Reg Communication}

Psums are stored in PEs' LRegs. Since each MAC operation needs a Reg write, the minimum number of Reg writes is the number of MAC operations, i.e., \equtext
\begin{equation}\label{eq:reg}
  Q_{\rm Reg}={\rm {\#~of~MACs}}.\equtext
\end{equation}
This is no doubt the minimum Reg communication. Keeping Psums in  LRegs naturally reaches this lower bound,  which minimizes the dynamic energy of LRegs. On the other hand, the static energy of  LRegs should also be optimized.

Suppose that each PE has $r$ ($\ge x_sy_sz_s$) LRegs to store Psums. For a PE, in each cycle, at most one Reg is written and the other $r\!-\!1$ Regs just consume static energy. If $r$ is large, the static energy consumption of the Regs may dominate the total Reg energy. Increasing the PE array size (i.e., $pq$) can reduce $r$, with increased arithmetic component power. 
However, with more PEs, the execution time is reduced so that the energy of the arithmetic components almost keeps unchanged.
From an energy point of view, using more PEs causes lower static energy consumption of the Regs, though the arithmetic power dissipation will increase.

Using GRegs to share inputs and weights to the  PE array completely avoids inter-PE communication. Duplicating inputs and weights from the GBuf to  GRegs brings little extra Reg communication. Thus, the Reg communication is minimized.

\subsection{Summary}

We summarize the communication lower bound here. The theoretical lower bound of the off-chip communication is defined in \eqref{eq:conv}, while a more practical lower bound is described in \eqref{eq:dram}. The lower bound of the GBuf communication is the off-chip communication of inputs and weights. The lower bound of the Reg communication is defined in \eqref{eq:reg}. 
There are two key conditions to achieve the lower bound: $bxy\!\approx\! Rz$ (for setting the tiling sizes) and $bxyz\!\approx\! S$ (most of the on-chip memory capacity should be assigned to Psums).

The superiorities of our dataflow and workload and storage mapping scheme come from three aspects. First, our dataflow and workload mapping scheme fully exploit OutR and WndR, and also combine InR and WtR in a balanced way. Our dataflow is actually  a combination of a communication-optimal MM implementation and WndR. The optimal dataflow and workload mapping scheme help reduce both DRAM communication and GBuf communication. Second, the concurrency of PEs is exploited to share inputs and weights by GRegs. Third,  Psums are stored in PEs' LRegs. The last two points both help reduce  GBuf communication and Reg communication. By combining these techniques, our approach can practically reach the minimum communication in a three-level memory hierarchy for convolution accelerations.

\section{Communication-Optimal CNN Accelerator Architecture}\label{sec:arch}
In this section, we  propose a CNN accelerator architecture with minimized communication, based on the theoretical conclusions of the previous section. According to the implication of \eqref{eq:dram}, most of the effective on-chip memory should be assigned to Psums to minimize the off-chip communication. We use an example containing 64KB Psums and $p\!\times\! q\!=\!16\!\times\!16$ PEs to describe the design methodology of our CNN accelerator. We use 16-bit fixed-point arithmetic units, so there are 32K (32768) entries  for Psums and each PE has 128 entries.

\begin{figure}[b]
\graphbottomtext
  \centering
  \includegraphics[width=1\columnwidth]{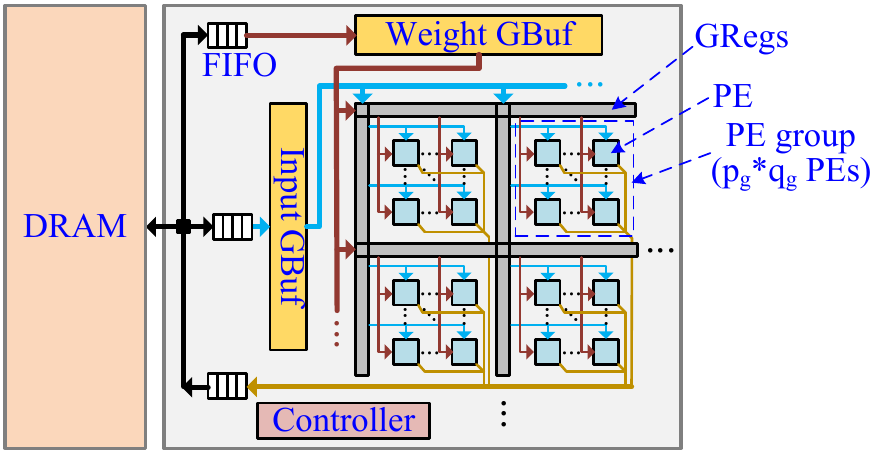}\\
  \graphlabel
  \caption{Architecture of our CNN accelerator.}\label{fig:arch}
\end{figure}

Based on the workload and storage mapping scheme illustrated in Fig.~\ref{fig:subiter}, we design our architecture as shown in Fig.~\ref{fig:arch}. The architecture mainly comprises a PE array, GRegs, two GBufs (an input GBuf (IGBuf) and a weight GBuf (WGBuf)), a controller, and some first-in first-out (FIFO) buffers that connect the off-chip DRAM and the on-chip memories.

\textbf{GBufs:} According to the discussions of Section~\ref{sec:gbuf}, 
to avoid long wires, the PE array is partitioned into PE groups and each PE group ($p_g\!\times\! q_g$ PEs) shares a set of GRegs (see Fig.~\ref{fig:arch}). In our example, $p_g\!\!=\!\!q_g\!\!=\!\!4$. All GReg rows (columns) store the same weights (inputs), and the same position in all GReg rows (columns) is written at the same time.

We discuss how to determine the sizes of the  GBufs. Remember that most of the effective on-chip memory should be assigned to Psums (i.e., $S\!\approx\!32768$) and the tiling sizes $\{b,z,y,x\}$ should satisfy $bxy\!\approx\! Rz$ to minimize the off-chip communication. If $R\!=\!1$ (i.e., no WndR), $bxy\!\approx\! z\!\approx\!181$. This is the approximate maximum value of $z$, so we set the size of the WGBuf  to 256 entries (0.5KB). With larger $R$, $bxy$  also becomes larger. Considering that the maximum $R$ is typically 9 ($W_K\!=\!H_K\!=\!3$ and $D\!=\!1$, see \eqref{eq:reuse}), the maximum $bxy$ is 543. Since the IGBuf  should store  $bx'y'$ (slightly larger than $bxy$) inputs from $b$ $y'\!\times\! x'$ input channel planes (see Fig.~\ref{fig:onchip}), we set the size of the IGBuf  to 1024 entries (2KB). We leave some extra entries in the GBufs to adapt to various tiling sizes. Even so, the GBuf capacity is still very small. Once data in the GBufs are loaded to the GRegs, the GBufs are used for prefetching  inputs and weights for the subsequent pass. The prefetching is (partially) overlapped with computation.

Inputs and weights stored in the GBufs are just in the order as in the reshaped input and weight sub-matrices (see Fig.~\ref{fig:onchip}). This is the natural order when loading them from  the DRAM. No special order is needed. Inputs are not unfolded so  we can exploit WndR on chip.

\begin{figure}[b]
\graphbottomtext
  \centering
  \includegraphics[width=1\columnwidth]{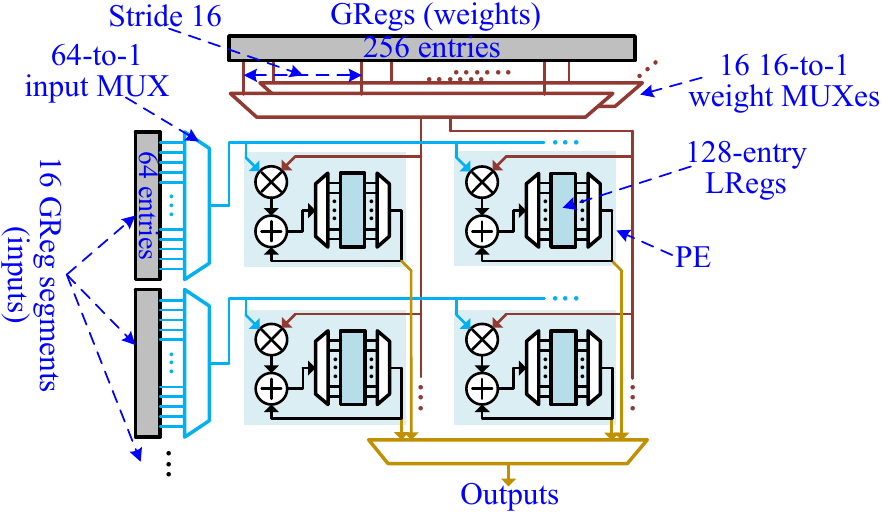}\\
  \graphlabel
  \caption{PE and GReg architectures (numbers are for our example).}\label{fig:pe}
\end{figure}

\textbf{GRegs:} A GReg row (storing weights) is shared by $p_g$ PE rows so $p_g\!\times\! q$ ($4\!\times \!16$) PEs share a GReg row. Data stored in each GReg row are copied from the WGBuf. To adapt to different $z$ values, we elaborate a multiplexer (MUX) structure, as shown in Fig.~\ref{fig:pe}. There are $q$ (16) $\frac{256}{q}$-to-1 (16-to-1) weight MUXes connecting the WGBuf and the $q$ (16) PE columns. Slightly different from the workload mapping shown in Fig.~\ref{fig:subiter}, here the $z_s$ channels computed by a PE is not consecutive but have a stride size $q$ (16). The inputs of the $q$ (16) weight MUXes are arranged in a round-robin way, so that the input range exactly covers all entries of the WGBuf. To adapt to different $z$ and $z_s$ values, we just control the selection signals of the weight MUXes. For instance, if $z\!=\!64$ (so $z_s\!=\!4$), the selection signals of the weight MUXes are from 0 to 3, so that only the first 64 entries of the WGBuf  can be selected. Such a weight MUX structure avoids the use of a complicated data transmission network (e.g., a network-on-chip).

To exploit WndR in the GRegs, each GReg column (storing inputs) is partitioned into $p$ (16) segments. A GReg segment has 64 entries and is shared by $1\!\times\!q_g$ ($1\!\times\! 4$) PEs. Each GReg segment loads $x_s'y_s'$ inputs (see Fig.~\ref{fig:subiter}) from the IGBuf. The  $x_s'y_s'$ inputs can be used in $W_KH_K$ passes to compute $x_sy_sz_s$ Psums. Each GReg segment has a 64-to-1 MUX to provide  inputs to the $1\!\times\!q_g$ ($1\!\times\! 4$) PEs. The selection signals of the input MUXes are from 0 to $x_s'y_s'\!-\!1$ so that only the first $x_s'y_s'$ entries of the GReg segments can be selected.

\textbf{PEs:} A PE comprises a MAC unit and a set of LRegs (128 entries) for Psums. Our architecture does not need LRegs in each PE to store inputs or weights. A PE computes a Psum and writes the accumulated result to an LReg in each cycle. All PEs operate synchronously. This means that, at the same moment, the selection signals of all input MUXes are identical, the selection signals of all weight MUXes are identical, and the read and write positions of all LRegs are also identical.

\textbf{Controller:} Our architecture  has a global controller, which schedules the computational process. It is  a finite-state machine that generates  control signals for all components, including the read/write signals and addresses of all memories and the selection signals of all MUXes. No local controller is needed in each PE.

\section{Experimental Results}\label{sec:result}

\begin{table}[b]
  \centering
  \caption{Five implementations of our architecture.\tablelabel}\label{tab:implem}
   \setlength{\tabcolsep}{5pt}
  \begin{tabular}{|c|c|c|c|c|c|}
  \hline
  {Implementation \#}&{1}&{2}&{3}&{4}&{5}\\
  \hline
  {\# of PEs}&{16$\times$16}&{32$\times$16}&{32$\times$32}&{32$\times$32}&{64$\times$32}\\
  \hline
  {GBuf size (KB)}&{2.5}&{2.5}&{2.5}&{3.625}&{3.625}\\
  \hline
  {LReg size/PE (B)}&{256}&{128}&{64}&{128}&{64}\\
  \hline
  {GReg size (KB)}&{10}&{15}&{18}&{27}&{36}\\
  \hline
  \multirow{2}{*}{\makecell{Effective on-chip\\ memory size (KB)}}&\multirow{2}{*}{66.5}&\multirow{2}{*}{66.5}&\multirow{2}{*}{66.5}&\multirow{2}{*}{131.625}&\multirow{2}{*}{131.625}\\
  {}&{}&{}&{}&{}&{}\\
  \hline
  \end{tabular}
\end{table}

Our CNN accelerator is implemented in Verilog. We synthesize it with Design Compiler based on the 65nm technology. We use Memory Compiler to generate the GBufs. The power dissipation is evaluated with PrimeTime. CACTI~\cite{cacti} is employed to evaluate the latency and energy consumption of a 2GB DDR3 DRAM (the peak bandwidth is 6.4GB/s). The core frequency is 500MHz and the DRAM frequency is 100MHz. A cycle-accurate simulator is built to evaluate the performance with memory access latency taken into account. The representative state-of-the-art, Eyeriss~\cite{eyeriss_jssc2017,one_isca2016}, is the baseline for comparison (detailed off-chip and on-chip communication volumes are reported in~\cite{eyeriss_jssc2017}). The workload is VGGNet-16~\cite{vgg_arxiv2014} with batch size 3, the same as the workload used in~\cite{eyeriss_jssc2017}. VGGNet has diverse layer dimensions, including large/shallow layers, small/deep layers, and layers with medium size/depth.

We evaluate five implementations of our accelerator with different PE numbers and on-chip memory sizes, as listed in Table~\ref{tab:implem}. Table~\ref{tab:energy} lists the  energy consumption of the basic operations, estimated by our simulations.

\begin{table}[t]
  \centering
  \caption{Energy consumption of  operations.\tablelabel}\label{tab:energy}
  \begin{tabular}{|c|c||c|c|}
  \hline
  {MAC}&{4.16pJ}&{LReg (256B) access}&{3.39pJ}\\
  \hline
  {GBuf (0.5KB) access}&{0.30pJ}&{LReg (128B) access}&{1.92pJ}\\
  \hline
   {GBuf (2KB) access}&{1.39pJ}&{LReg (64B) access}&{1.16pJ}\\
  \hline
   {GBuf (3.125KB) access}&{2.36pJ}&{DRAM (2GB) access}&{427.9pJ}\\
  \hline
  \end{tabular}
\end{table}

\subsection{DRAM Access Volume}

\begin{figure}[t]
  \centering
  \includegraphics[width=1\columnwidth]{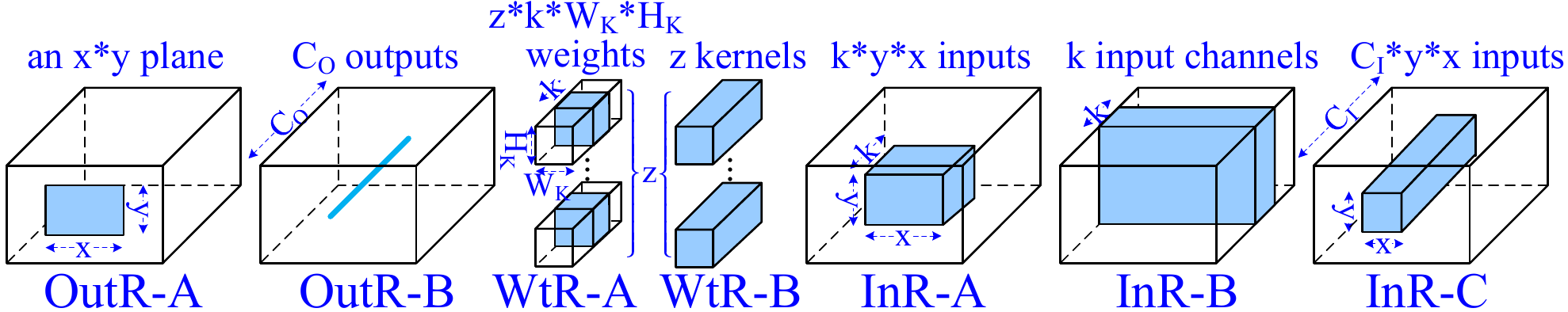}\\
   \graphlabel
  \caption{Different dataflows for comparison.}\label{fig:compare}\graphtext
\end{figure}

\begin{figure}[b]
\graphbottomtext
  \centering
  \includegraphics[width=1\columnwidth]{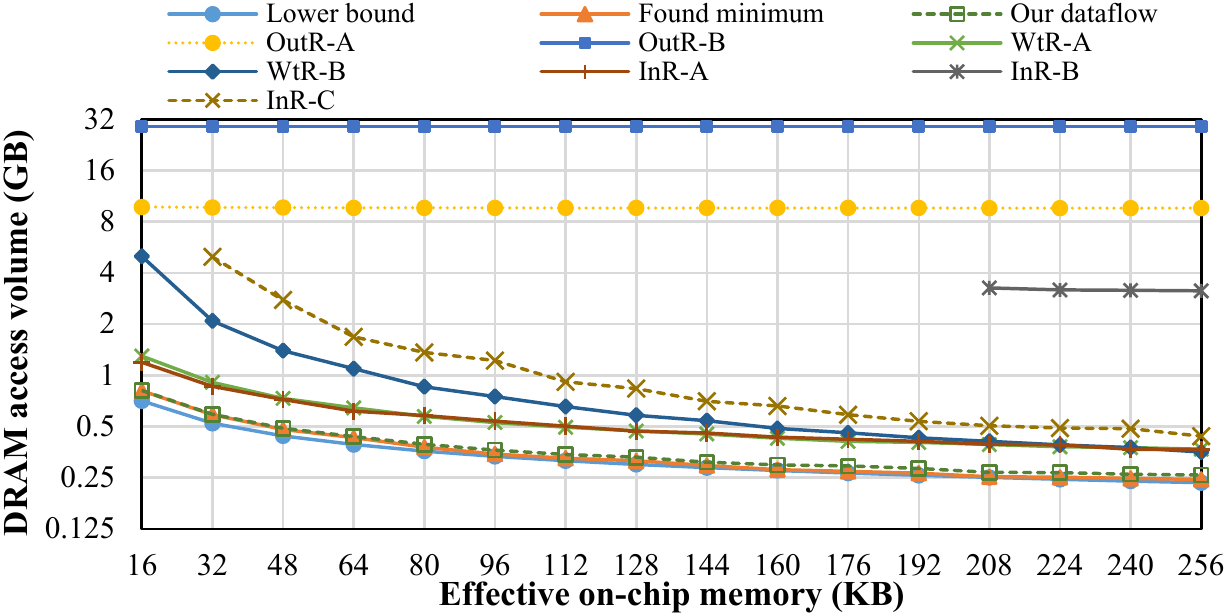}\\
  \graphlabel
  \caption{Comparison of different dataflows under different effective on-chip memory sizes.}\label{fig:sweep}
\end{figure}

We compare our dataflow with other dataflows based on different data reuse patterns, as shown in Fig.~\ref{fig:compare}, in which the colored blocks reside on chip for reuse. For example, in InR-A, a $k\!\times \!y\!\times\! x$ block resides on chip for reuse, while the associated weights and outputs are shuffled on and off chip when necessary. These dataflows should cover the most popular ones used in  literature. For example, ShiDiaoNao~\cite{one_isca2015} uses OutR-A.

Fig.~\ref{fig:sweep} compares the DRAM access volume under different effective on-chip memory sizes. The lower bound is calculated by \eqref{eq:dram}. To make a fair comparison and to remove the impact of improper tiling sizes, the tiling sizes of all dataflows are obtained by exhaustive searches (since the loop order is fixed, searching for the best tiling sizes is fast, typically shorter than 0.1s). The found minimum is obtained by searching for the best dataflow with the best tiling sizes for each layer. Fig.~\ref{fig:sweep} demonstrates that our dataflow produces almost the same DRAM access volume as the found minimum, and the difference is only 4.5\% on average. To understand why our dataflow does not produce the least DRAM access volume for all layers, we have mentioned at the end of Section~\ref{sec:derivation} that the derived lower bound is in the form of $\Omega$ instead of a precise value. However, despite that, it is unnecessary to select the best dataflow from multiple candidates, as the expected improvement in the DRAM access volume is less than 5\%. Our dataflow produces 10\% more DRAM access volume on average than the theoretical lower bound. The 2nd and 3rd best dataflows, InR-A and WtR-A, respectively produce 45.1\% and 45.8\% more DRAM access volume than ours.  

\begin{figure}[t]
  \centering
  \includegraphics[width=1\columnwidth]{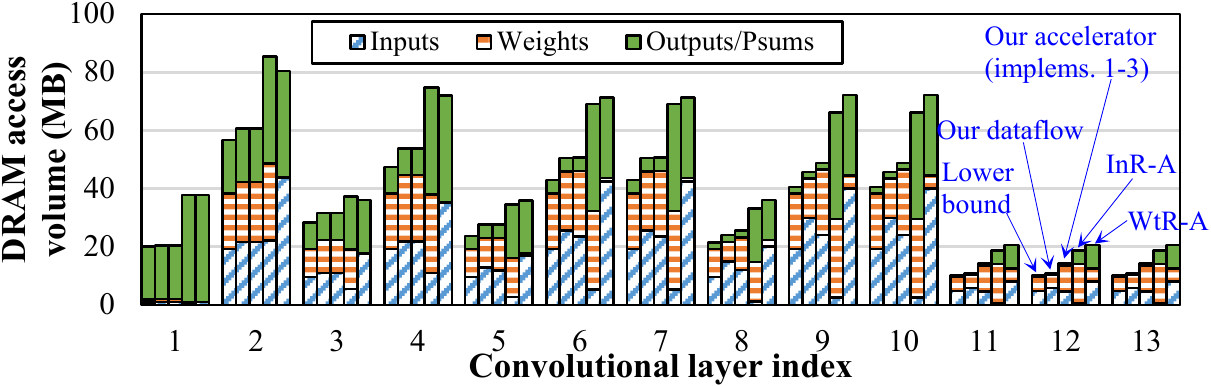}\\
  \graphlabel
  \caption{Per-layer comparison of different dataflows (66.5KB effective on-chip memory).}\label{fig:offc_64kb}\graphtext
\end{figure}

Fig.~\ref{fig:offc_64kb} shows the per-layer DRAM access volume of the lower bound, our dataflow, our implementations 1-3, InR-A, and WtR-A. The difference between our dataflow and our implementation is that the latter has a fixed on-chip memory splitting (e.g., 64KB Psums plus 2.5KB GBufs in our implementations 1-3). Due to this reason, our implementations 1-3 produce 3-4\% more DRAM access than our dataflow, indicating tiny impacts of the fixed on-chip memory splitting. Our dataflow and implementations produce balanced input and weight access volumes, while outputs take up a small portion of the DRAM access volume. For InR-A and WtR-A (the 2nd and 3rd best dataflows), outputs involve a large portion of the DRAM access volume, and the input and weight access volumes are not balanced, leading to much larger memory access volumes.


\begin{figure}[b]
\graphbottomtext
  \centering
  \includegraphics[width=1\columnwidth]{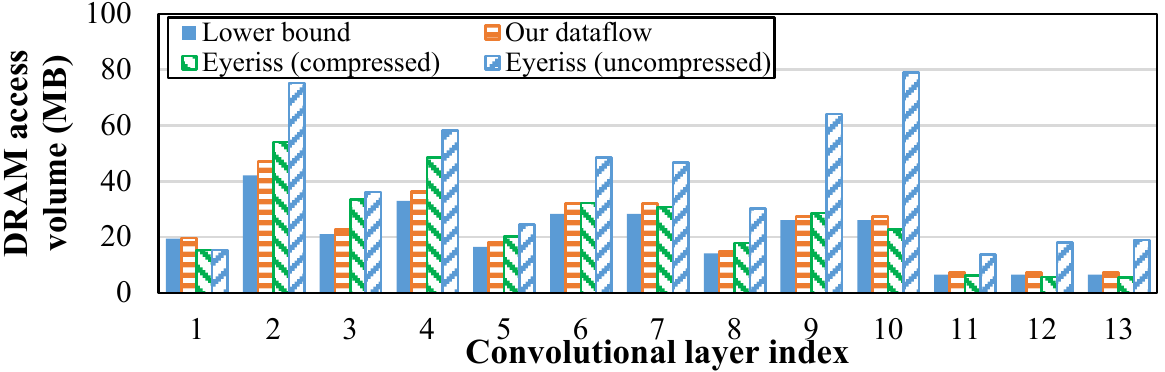}\\
   \graphlabel
  \caption{Comparison  with Eyeriss on DRAM access (173.5KB effective on-chip memory).}\label{fig:offc_eyeriss}
\end{figure}

We try to make an apple-to-apple comparison with published data but find it difficult. Ref.~\cite{eyeriss_jssc2017} reported the DRAM access volume of VGGNet-16 with input compression on Eyeriss. Ref.~\cite{select_date2018} selected the best dataflow with the minimum DRAM access volume from three candidates. Inputs, weights, and outputs are pruned in~\cite{select_date2018}. Our work targets at general CNN accelerators without data pruning/compression, so the results reported in~\cite{eyeriss_jssc2017,select_date2018} are not directly comparable to ours. Instead,  we try to make an approximate comparison.

Eyeriss has a 108KB GBuf but the effective on-chip memory capacity is 173.5KB, since 100KB of the GBuf stores inputs and outputs (the other 8KB is used for  prefetching weights), while  weights are stored in PEs' local SRAMs (each PE has 448B local SRAMs)~\cite{eyeriss_jssc2017}. Under the 173.5KB effective on-chip memory limit, we compare our dataflow and Eyeriss with and without input compression, as shown in Fig.~\ref{fig:offc_eyeriss} and Table~\ref{tab:offc_eyeriss}. Ref.~\cite{eyeriss_jssc2017} has reported the per-layer input compression ratios of VGGNet-16 but the proportion of the input access volume in the total access volume is not reported. We use the proportion of our dataflow to evaluate the off-chip DRAM access volume for Eyeriss without input compression. 
Our dataflow reduces 43.3\% DRAM access volume than Eyeriss without input compression. Our dataflow even produces 6.7\% less DRAM access volume than Eyeriss with input compression.

We notice from Fig.~\ref{fig:offc_eyeriss} that for layer 1, Eyeriss produces a lower DRAM access volume than the lower bound. This is because the derived  lower bound is in the form of $\Omega$ instead of a precise value. It represents the asymptotic relation between the off-chip communication volume and the on-chip memory capacity when the problem scale is large enough. Special cases  exist for small workloads. However, the first layer typically takes up a small portion of the off-chip communication volume, so that its impact on the overall energy efficiency and performance is negligible.

\begin{table}[t]
  \centering
  \caption{Comparison  with Eyeriss on DRAM access (173.5KB effective on-chip memory).\tablelabel}\label{tab:offc_eyeriss}
  \begin{tabular}{|c|c|c|}
  \hline
  {}&{DRAM access (MB)}&{DRAM access/MAC}\\
  \hline
  {Lower bound}&{274.8}&{0.0030}\\
  {Our dataflow}&{299.7}&{0.0033}\\
  {Eyeriss (compr.)}&{321.3}&{0.0035}\\
  {Eyeriss (uncompr.)}&{528.8}&{0.0057}\\
  \hline
  \end{tabular}
\end{table}


Compared with FlexFlow~\cite{select_hpca2017} with 192KB on-chip memory (64KB GBuf and 512B/PE local storage) which selects the best dataflow from several candidates, the DRAM access/MAC metric of our dataflow (173.5KB effective on-chip memory) is 33\% better (0.0033 versus 0.0049).

\subsection{GBuf Access Volume}

\begin{figure}[b]
\graphbottomtext
  \centering
  \includegraphics[width=1\columnwidth]{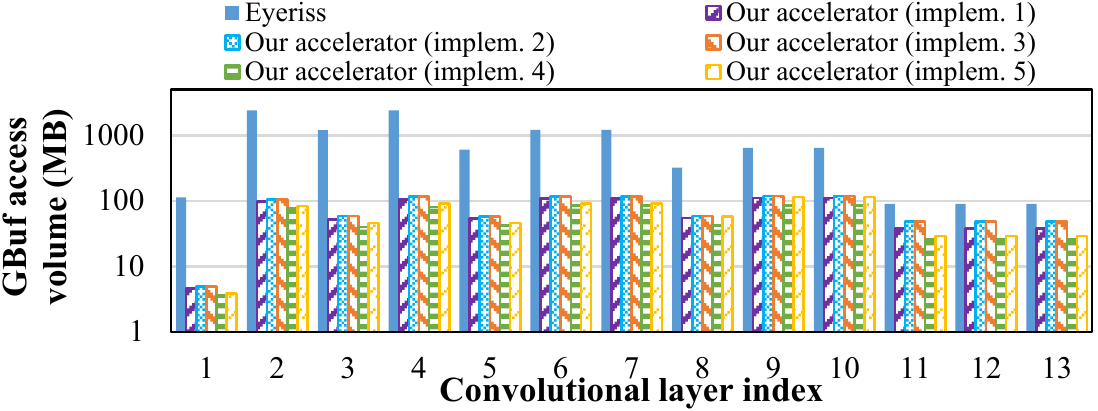}\\
  \graphlabel
  \caption{Comparison with Eyeriss on GBuf access (vertical axis is in logarithmic scale).}\label{fig:gbuf_eyeriss}
\end{figure}

Fig.~\ref{fig:gbuf_eyeriss} shows the GBuf access volume of our accelerator and the comparison with Eyeriss.  Our implementations (with smaller total and effective on-chip memory capacities) produce much less GBuf communication than Eyeriss, and the reduction factors are 10.9-15.8$\times$. The large reduction is  due to the elimination of data shuffling between the GBuf and LRegs.

To understand how our accelerator reaches the minimum GBuf communication, we list the DRAM and GBuf access volumes of implementation 1 in Table~\ref{tab:gbuf}. For weights, the GBuf read and write volumes respectively equal to the DRAM read volume, reaching the theoretical lower bound. For inputs, the GBuf write volume is slightly larger than the DRAM read volume, because the tiling-based dataflow causes some input or output blocks out of the input or output boundaries, resulting in a few redundant GBuf writes. The GBuf read volume for inputs is 1.67$\times$ of the DRAM read volume for inputs. The extra reads are from the halos of convolution inputs, which is explained in Section~\ref{sec:gbuf}. The  GBuf read and write volumes are respectively 1.33$\times$ and 1.07$\times$ of the DRAM read volume, indicating that our accelerator roughly reaches the theoretical lower bound of the GBuf communication.

\begin{table}[t]
  \centering
  \caption{Ratio of GBuf access volume to DRAM access volume (for our accelerator implementation 1).\tablelabel}\label{tab:gbuf}
  \setlength{\tabcolsep}{4pt}
  \begin{tabular}{|c|cc|cc|}
  \hline
  {}&\multicolumn{2}{c|}{DRAM access}&\multicolumn{2}{c|}{GBuf access}\\
  {}&{Read}&{Write}&{Read}&{Write}\\
  \hline
  {Inputs}&{187.5MB}&{0}&{313.5MB (1.67$\times$)}&{216.2MB (1.15$\times$)}\\
  {Weights}&{196.6MB}&{0}&{196.6MB (1.00$\times$)}&{196.6MB (1.00$\times$)}\\
  {Outputs}&{0}&{77.5MB}&{0}&{0}\\
  \hline
  \end{tabular}
\end{table}

\subsection{Reg Access Volume}

\begin{figure}[t]
  \centering
  \includegraphics[width=1\columnwidth]{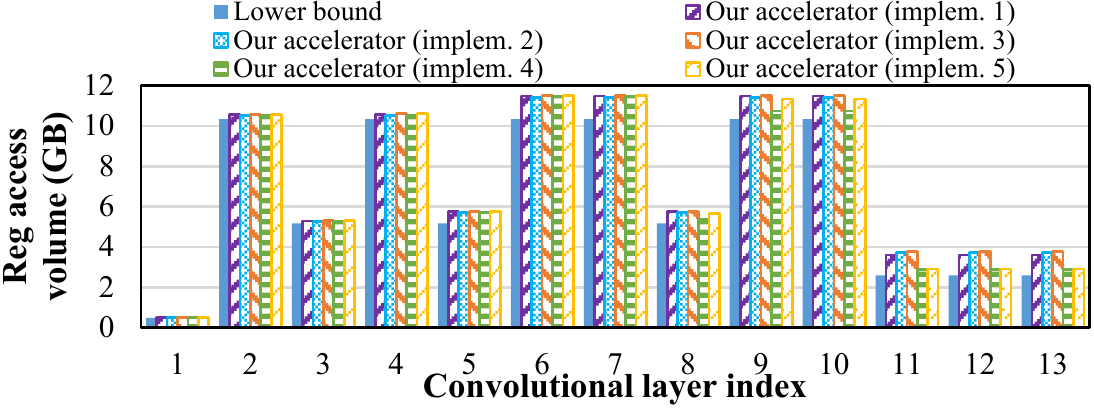}\\
  \graphlabel
  \caption{Reg access volume of our accelerator.}\label{fig:regc}\graphtext
\end{figure}

Fig.~\ref{fig:regc} shows the Reg access volume of our accelerator and the comparison with the lower bound. The lower bound is calculated from~\eqref{eq:reg}. The Reg access volume of our accelerator is only 5.9-11.8\% larger than the lower bound, indicating that our accelerator almost reaches the theoretical lower bound of the Reg communication. The extra Reg communication is from a) the GReg communication, and b) Psums that are out of the output boundary caused by the tiling-based approach.

We are not able to make a numerical comparison with any existing CNN accelerator on the Reg communication since  no similar result was found. For an intuitional comparison with Eyeriss (and other accelerators which propagate data in the PE array, e.g., \cite{one2_tcasi2018,one_icpads2017}), our architecture is expected to  reduce the Reg communication severalfold, because Eyeriss not only writes Psums to  Regs in each cycle (which our accelerator also has), but also propagates inputs, weights, and Psums in the PE array (which our accelerator does not have).

\subsection{Energy Efficiency and Performance}

\begin{figure}[t]
  \centering
  \includegraphics[width=1\columnwidth]{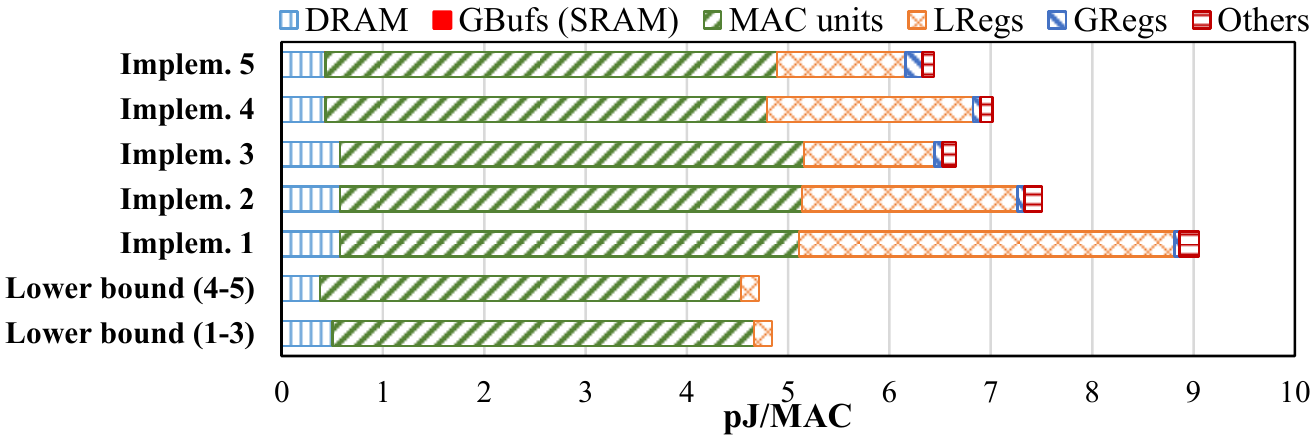}\\
   \graphlabel
  \caption{Energy efficiency of our accelerator.}\label{fig:energy}\graphtext
\end{figure}

Fig.~\ref{fig:energy} shows the energy efficiency (in pJ/MAC) of our accelerator and the comparison with the lower bound. The lower bound is calculated by adding together the DRAM access energy (under the corresponding effective on-chip memory capacity limit), the MAC energy, and the Reg write energy (of (\# of MACs) writes). The lower bound describes the essential energy consumption to complete the MAC operations. MAC operations and Regs dominate the energy consumption of our accelerator. Our accelerator almost reaches the lower bound for DRAM communication and MAC operations. For the Reg energy, our accelerator brings higher energy than the  lower bound. The extra Reg energy is mainly due to the static energy consumption of the LRegs. With fewer LRegs in each PE, the Reg energy consumption is decreased.  Even so, MAC operations take up the largest portion in the total energy consumption, implying  that our accelerator is computation dominant. The gap between the energy efficiency of our implementations and the best value is only 37-87\%, indicating that our accelerator  roughly reaches the  best energy efficiency.

According to the measured data reported in~\cite{eyeriss_jssc2017}, the energy efficiency of Eyeriss with input compression and zero gating is 22.1pJ/MAC (for on-chip aspects). As a direct numeric comparison, our accelerator (by simulations) without data compression or gating is 2.61-3.68$\times$ more energy efficient than Eyeriss for  on-chip aspects. 

\begin{figure}[t]
  \centering
  \includegraphics[width=1\columnwidth]{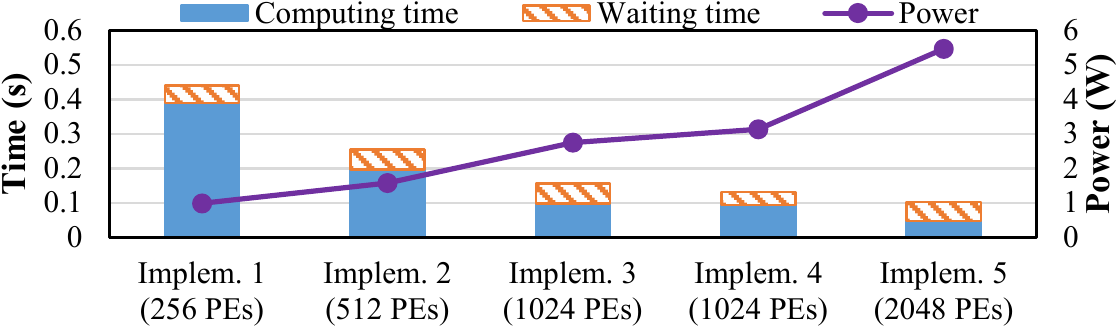}\\
  \graphlabel
  \caption{Power dissipation and performance of our accelerator.}\label{fig:perfpow}\graphgraph
\end{figure}

Fig.~\ref{fig:perfpow} shows the performance and power dissipation of our accelerator. With more PEs, the execution time is reduced and the power is increased. The proportion of waiting time increases with more PEs. With reduced computational time, the memory access latency cannot be fully overlapped by computation so it affects  the  execution time. Compared with Eyeriss, our five implementations achieve 9.8-42.3$\times$ performance gain, with memory access latency taken into account.

\subsection{Memory and PE Utilizations}

\begin{figure}[t]
  \centering
  \includegraphics[width=1\columnwidth]{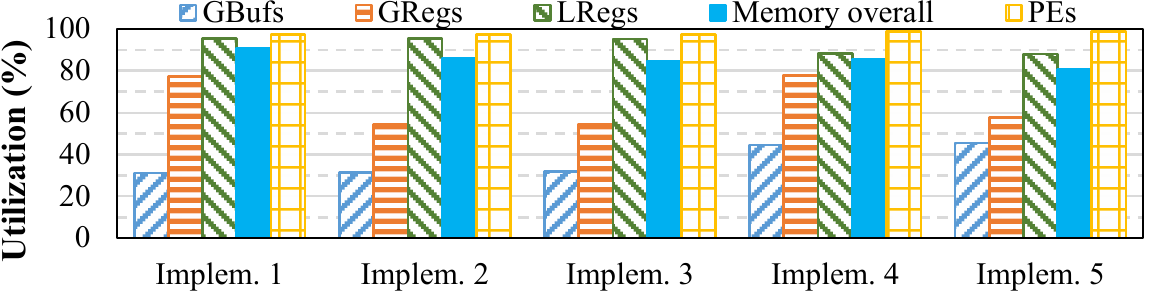}\\
  \graphlabel
  \caption{Memory and PE utilizations (average utilization of all layers) of our accelerator.}\label{fig:util}\graphtext
\end{figure}

Fig.~\ref{fig:util} shows the memory and PE utilizations of our accelerator. The GBuf and GReg utilizations are low because we have some redundant SRAMs and GRegs to adapt to   diverse tiling sizes caused by different layer dimensions. The LReg utilization keeps high ($>$88\%) in different implementations, indicating that most of the LRegs are utilized. Increasing the PE number can lower the LReg utilization, due to the small workload of each PE. Since the LRegs dominate the on-chip memories, the overall memory utilization is also high (80.6-91.0\%). The PE utilization keeps very high ($>$97\%). In fact, all PEs are busy in our implementations. The small quantity of useless PE workload is caused by the tiling-based approach.

\section{Conclusions}\label{sec:concl}

In current CNN accelerators, communication dominates the energy consumption and consumes much more energy than computation. In this work, we provide the theoretical lower bounds of both off-chip communication and on-chip communication. Based on the theoretical results, we elaborate our communication-optimal dataflow as well as a communication-optimal accelerator architecture. We  demonstrate by both theoretical analysis and experimental results that our dataflow and architecture are able to practically reach the minimum communication in a three-level memory hierarchy. Our CNN accelerator is computation dominant and the energy efficiency is close to the theoretical best value.

\section*{Acknowledgements}

This work was supported in part by  National Key R\&D Program of China under Grant 2018YFA0701500,  in part  by Key Research Program of Frontier Sciences, CAS, under Grant ZDBS-LY-JSC012, in part by  National Natural Science Foundation of China under Grants 61804155 \& 61834006, in part by    Youth Innovation Promotion Association CAS,   in part by  Young Elite Scientists Sponsorship Program by CAST under Grant 2018QNRC001, and  in part by  Beijing Academy of Artificial Intelligence (BAAI). 
We thank Profs. Mingzhe Zhang and  Zidong Du for their great help in answering the reviewers' comments.



\end{document}